\documentclass[journal]{IEEEtran}

\usepackage{amsmath,amssymb, amsthm}
\usepackage{booktabs}
\usepackage{hyperref}
\usepackage{enumitem}
\usepackage{algorithm}
\usepackage{algpseudocode}
\usepackage{mathtools}
\usepackage{graphicx}
\usepackage{xcolor}
\usepackage{tikz}
\usepackage{balance}
\usepackage{float}

\usetikzlibrary{arrows.meta, positioning, fit, backgrounds, shapes.geometric, patterns}

\newtheorem{theorem}{Theorem}
\newtheorem{lemma}{Lemma}
\newtheorem{proposition}{Proposition}
\newtheorem{corollary}{Corollary}

\theoremstyle{definition}

\theoremstyle{remark}
\newtheorem{remark}{Remark}

\title{Trust-Aware Topology Learning for Dynamic Decentralized Federated Learning under Adversaries}
\author{Shubham~Vaishnav$^*$, Murtaza~Rangwala$^*$, Ali~Beikmohammadi, Sindri~Magnússon, and~Rajkumar~Buyya
\thanks{$^*$These authors contributed equally to this work.}
\thanks{M. Rangwala and R.Buyya are with the Quantum Cloud Computing and Distributed Systems (qCLOUDS) Laboratory at the University of Melbourne, Australia (e-mail: murtaza.rangwala@student.unimelb.edu.au; rbuyya@unimelb.edu.au).}
\thanks{S. Vaishnav, A. Beikmohammadi, and S. Magn\'usson are with the Department of Computer and System Science, Stockholm University, Sweden (e-mail: shubham.vaishnav@dsv.su.se; beikmohammadi@dsv.su.se; sindri.magnusson@dsv.su.se).}
}

\begin{document}
\maketitle

\begin{abstract}
% Decentralized federated learning (DFL) lets edge devices train personalized models without a central server, but it exposes every device to its peers: in dynamic, mobile deployments an adversary can poison both the model updates and the \emph{topology information} that devices use to decide whom to learn from.
In dynamic mobile decentralized federated learning (DFL), where
device connectivity varies over time with movement, failures, and
adversarial behavior, an adversary can poison both the model
updates and the \emph{topology information} that devices use to decide
whom to learn from.
We present DMTT (Dynamic \texttt{MURMURA} with Trusted Topology), a decentralized personalized federated learning protocol built on
\texttt{MURMURA}, a prior framework that uses evidential deep learning to assess
peer model compatibility and down-weight distribution-mismatched collaborators.
DMTT extends this foundation to time-varying graphs under topology-manipulation
attacks.
Each device maintains a confidence-weighted local topology view from direct
link-reliability estimates, signed topology claims, witness corroboration, and a
Beta-distributed source-trust model, and then aggregates only over a
trust-screened collaborator set using a composite score that fuses model
compatibility, topology trust, and link reliability. We prove that the induced
screened mixing matrices confine Byzantine influence to a bounded residual
$\delta_{\max}$ that vanishes under perfect screening, and we implement DMTT as a
coordinator-free distributed framework in which each client runs as an independent
ZeroMQ process synchronised by a shared wall-clock epoch. On UCI HAR and PAMAP2, each
partitioned across $100$ mobile clients with Dirichlet heterogeneity, DMTT sustains honest-node accuracy above $0.862$ at every tested fraction from
$10\%$ to $80\%$ on UCI HAR and above $0.829$ on PAMAP2, essentially matching no-attack accuracy at low
fractions and degrading gracefully to near-local-only performance at extreme fractions. Static
and dynamic FedAvg collapse to chance at every fraction, and dedicated robust
aggregators (Krum, BALANCE, UBAR) all fail to consistently beat a safe local-only
baseline that forgoes collaboration entirely. DMTT is the only method that clears
the local-only bar across both datasets at all tested fractions, demonstrating that
trust-aware screening both resists poisoning and profits from honest peers.
Empirically the surviving Byzantine aggregation weight is exactly zero across all
runs, consistent with the $\delta_{\max}=0$ condition of the theoretical analysis. The same
protocol runs end-to-end on a coordinator-free distributed ZeroMQ backend that
reproduces these results on real nodes from the Melbourne Research Cloud.
\end{abstract}

\begin{IEEEkeywords}
Decentralized federated learning, mobile computing, Byzantine fault tolerance, dynamic topologies, trust management, peer-to-peer computing.
\end{IEEEkeywords}

\section{Introduction}
Federated learning (FL) lets many devices train a shared model together without ever sending their raw data to a central server. It is now used in mobile keyboards, healthcare, and industrial IoT. A core difficulty in any realistic FL deployment is that no two devices
see the same kind of data: a smartwatch on one wearer records different activity patterns from one on another. A hospital sees a
different patient mix from its neighbour, and so on. With this kind
of \emph{non-IID} (non-identically-distributed) data, a single global
model rarely fits every device well. Each device benefits from a
\emph{personalized} model that reflects its own data, while still
learning from peers whose data is related but not identical. Standard
FL still relies on one central server to coordinate this, which
becomes both a privacy bottleneck and a single point of failure.
Decentralized federated learning (DFL) removes that server entirely:
devices exchange and aggregate model updates directly with
peers~\cite{ye2022decentralized,kharrat2024decentralized,liu2024decentralized}.
The system becomes more private and more resilient, but every device
is now exposed to its peers. Without a central authority, each
device must decide on its own which peers to learn from, even though
any participant can in principle behave adversarially.

\texttt{MURMURA}~\cite{rangwala2025evidential} addresses this
per-device decision problem under non-IID data. Its central
observation is that not every peer is a useful collaborator. A peer
trained on a very different distribution can actively hurt local
performance, even if the peer is honest and well-trained.
\texttt{MURMURA} calls this property \emph{peer compatibility} and
quantifies it from each device's local point of view. The device runs
an incoming peer model on its own data and looks at \emph{how} the
peer model is uncertain about that data. Using evidential deep
learning, it separates two kinds of uncertainty in the peer's
predictions: uncertainty because the peer has never really seen data
like this (a sign of incompatibility), and uncertainty because the
data is just inherently tricky (which is not the peer's fault).
Peers showing the first kind are down-weighted during aggregation. Peers showing the second kind are kept. Each round produces a fresh
per-device compatibility score, so the set of useful collaborators
adapts as training progresses.

However, \texttt{MURMURA} does not solve the systems-level question of how peers discover and maintain knowledge of a changing network topology in the first place. This gap becomes particularly sharp in dynamic DFL, where connectivity varies over time due to mobility, failures, or adversarial behavior. Existing \texttt{MURMURA} evaluations assume fixed graph structures and synchronous rounds. In contrast, dynamic topology works such as Soft-DSGD and ToLRDUL either rely on link reliability models that do not consider malicious topology reports or assume globally coordinated topology updates that are too strong for fully decentralized environments \cite{rangwala2025evidential,ye2022decentralized,wu2024topology}. Therefore, there is a need for dynamic \texttt{MURMURA} with adversary-aware topology trust.

Beyond practicality, there are also strong theoretical reasons why this gap has persisted. Nesterenko and Tixeuil show that under Byzantine faults, weak topology discovery requires at least $k+1$-connectivity, while strong topology discovery requires more than $2k+1$-connectivity to tolerate $k$ Byzantine nodes \cite{nesterenko2009discovering}.

These requirements make exact global topology agreement impractical for sparse wearable, IoT, or mobile FL deployments. As a result, the appropriate target is not exact topology discovery, but rather a confidence-weighted topology view: exact for direct neighbors, reasonably reliable for two-hop structure, and probabilistic beyond that.

This work proposes a decentralized personalized FL framework that extends \texttt{MURMURA} beyond a fixed-neighborhood setting. It maintains a confidence-weighted topology state built from direct observation, signed claims, corroboration, and uncertainty-aware source trust. It then uses this trusted graph to choose collaborators using evidential trust. The proposed system can be viewed as \texttt{MURMURA} augmented with a trustworthy topology memory. The design separates the problem into three distinct layers as shown in Table~\ref{tab:three_layers}.

\begin{table}[t]
\centering
\caption{Three-layer separation between reachability, topology trust, and collaboration}
\label{tab:three_layers}
\begin{tabular}{@{}p{0.28\columnwidth} p{0.66\columnwidth}@{}}
\toprule
\textbf{Layer} & \textbf{Description} \\
\midrule

\textbf{Direct reachability} &
Tracks which peers can be contacted in the current epoch based on local observations such as authenticated beacons and packet acknowledgments~\cite{ye2022decentralized}. \\[0.6ex]

\textbf{Topology trust} &
Maintains confidence in topology claims using signed reports, corroboration across multiple witnesses, historical consistency, and path diversity~\cite{nesterenko2009discovering,gaucher2024unified}. \\[0.6ex]

\textbf{Collaboration graph} &
Determines which reachable and trustworthy peers should influence local training, combining \texttt{MURMURA} model trust, topology trust, and communication cost~\cite{rangwala2025evidential,kharrat2024decentralized,liu2024decentralized}. \\

\bottomrule
\end{tabular}
\end{table}

\paragraph{Contributions.}
\begin{enumerate}[leftmargin=2em]
  \item \textbf{DMTT protocol with convergence guarantees.} A four-algorithm protocol integrating link-reliability tracking, witness-weighted signed topology claims, Beta-distributed source trust, and trust-aware collaborator selection. We prove that the induced mixing matrices confine Byzantine influence to a bounded residual that vanishes under perfect screening.
  \item \textbf{Fully distributed \texttt{MURMURA} framework.} \texttt{MURMURA} is redesigned from single-process simulation into independent ZeroMQ processes synchronised by a shared wall-clock epoch, with a passive PULL-only monitor collecting metrics without influencing training.
  \item \textbf{Mobility-driven dynamic topology and topology-liar attack.} A deterministic random-walk mobility model that generates $\mathcal{G}^t$ identically across all nodes from a shared seed, plus an oracle-free self-inclusion check: if $j$ delivered a message to $i$ this round, $j$'s topology claim must include $i$. We introduce an \emph{omission} topology-liar attack in which Byzantine nodes claim only the Byzantine coalition, so the self-inclusion contradiction fires every round.
  \item \textbf{Empirical validation.} On UCI HAR and PAMAP2 under combined topology-liar and Gaussian poisoning at Byzantine fractions of $10\%$--$80\%$, DMTT attains the highest honest-node accuracy in every setting while non-robust baselines collapse to chance; a trust diagnostic confirms zero aggregation weight to Byzantine peers, instantiating the perfect-screening regime of the theory.
\end{enumerate}

\section{Related Work}
\label{sec:related}

\paragraph{Decentralized and personalized federated learning.}
Decentralized federated learning removes the central server and allows clients to exchange model updates over a peer-to-peer graph. Decentralized FedAvg and related local-update methods show that convergence and communication cost depend on the number of local updates, the frequency of communication, and the mixing matrix~\cite{9850408,9713700}. This design is useful for edge and mobile systems because it avoids a server bottleneck and a single point of failure.

However, clients usually have non-IID data, so a fixed neighborhood is not always useful. Some decentralized personalized-learning methods therefore learn collaboration graphs or adaptive mixing weights from the relationships between clients and their local tasks~\cite{pmlr-v108-zantedeschi20a,Li_2022_CVPR,ye2023personalized}. Directed collaboration also allows clients with different data, computing, and communication conditions to use different peer sets~\cite{liu2024decentralized}. MURMURA~\cite{rangwala2025evidential} follows a related model-level approach. Each client tests an incoming model on its own data and uses evidential uncertainty to identify peers whose data is too different from its own~\cite{sensoy2018evidential}. DMTT keeps this idea of client-specific model compatibility, but also considers whether the topology information used to select peers can be trusted.

\paragraph{Dynamic topologies and mobile decentralized learning.}
Several studies examine decentralized learning when links are unreliable or the graph changes over time. Soft-DSGD uses link-dependent mixing weights to handle packet loss and other communication problems~\cite{ye2022decentralized}. ToLRDUL selects links by considering data differences, representation differences, and unreliable D2D connections~\cite{wu2024topology}. Koloskova et al. analyze local updates and changing mixing matrices in a common framework~\cite{koloskova2020unified}. Other studies examine optimization over directed, time-varying graphs and show the importance of connectivity over time~\cite{nedic2014distributed,7405263}. For non-IID data, the choice of neighbors matters as well as the number of neighbors. Neighborhood heterogeneity can affect the convergence bound, and a carefully learned sparse graph can reduce this problem by connecting clients with complementary data~\cite{le2023refined}.

Recent work has applied these ideas to mobile and asynchronous DFL. Lu et al. use the Metropolis--Hastings rule to update mixing weights when the communication graph changes. They also use secret sharing to protect model states. However, their threat model assumes semi-honest participants and does not cover active attacks~\cite{lu2023privacy}. DySTop controls update staleness and builds a changing topology for asynchronous DFL. Its topology construction considers non-IID data, delay, and communication cost~\cite{shi2026dystop}. Li et al. study DFL over time-varying and heterogeneous mobile computing networks. Their analysis links neighborhood heterogeneity with the properties of the communication graph~\cite{11271539}. These studies are close to DMTT because they consider changing mobile graphs. However, they assume that topology and link information are honest. They do not consider a Byzantine client that hides honest neighbors, invents links, or sends false topology reports.

\paragraph{Byzantine resilience and trusted topology.}
Another line of work studies Byzantine-resilient decentralized learning. Most of these methods assume that the communication graph is already known and focus on filtering model or gradient messages. BRIDGE studies Byzantine failures in directed peer-to-peer networks and gives convergence guarantees under graph connectivity conditions~\cite{9815556}. UBAR provides a Byzantine-resilient aggregation rule for decentralized learning. Iterative outlier scissor also shows that local robust aggregation can create disagreement and mixing errors if the resulting virtual mixing process is not properly controlled~\cite{guo2021byzantine,wu2023byzantine}. Self-centered clipping shows that topology bottlenecks can make Byzantine attacks more harmful. It gives a convergence bound that depends on the remaining Byzantine influence and the graph structure~\cite{he2022byzantine}. BALANCE uses each client's local model as a reference for filtering poisoned peer models and provides convergence results for decentralized FL under poisoning attacks~\cite{fang2024byzantine}. CMFL uses rotating committees to score local gradients and select updates in a serverless FL system~\cite{9870745}. Other work studies Byzantine resilience under heavy-tailed data and compressed communication, but it still uses a central server for aggregation~\cite{10070815}.

These model-level defenses usually start after a peer has already entered the communication or aggregation neighborhood. A Byzantine client may therefore lie about reachability, adjacency, or witness information before its model update is filtered. Work on Byzantine topology discovery shows that exact global topology recovery requires strong connectivity assumptions. Weak discovery needs connectivity greater than the number of Byzantine faults, while strong discovery needs connectivity greater than twice that number~\cite{nesterenko2009discovering}. This makes a confidence-weighted local topology view more realistic than a claim of perfect global knowledge. In low-power IoT routing, TRAIL uses reachability tests to detect topological attackers without relying on heavy cryptography~\cite{perrey2013trail}. In mobile ad hoc networks, reputation systems combine direct observations with recommendations from other nodes while reducing the effect of stale or inconsistent information~\cite{7345272}. Blockchain-based FL can provide a stronger audit trail by combining malicious-gradient detection, homomorphic protection, and blockchain records, but it also adds infrastructure and coordination costs~\cite{9849010}.

DMTT combines these ideas through three separate decisions. Direct reachability determines which peers can exchange messages at the current time. Topology trust determines whether a peer's claims about the wider graph should affect peer selection. MURMURA model compatibility determines whether that peer is useful for local personalization. Dynamic-topology methods usually assume that the graph information is honest, while Byzantine aggregators usually filter only model updates. DMTT addresses both problems by screening topology claims before building the collaborator set and then screening the models of the remaining peers. This produces a screened mixing matrix whose Byzantine influence can be bounded while keeping the implementation fully distributed.

\section{Problem Formulation}
\label{sec:problem}

This section states the system, data, and adversary model, identifies the local state each client must maintain, and writes down the optimization problem that the protocol of Section~\ref{sec:protocol} solves. Concrete update rules and equations for each state variable are deferred to Section~\ref{sec:protocol}.

\subsection{System and Communication Model}

We consider a decentralized personalized FL system with a fixed set of clients $\mathcal{V} = \{1,2,\dots,N\}$.
Client identities are persistent, but client positions vary over time. Let ${\bf r}_i^t$ denote the position of client $i$ at epoch $t$. Mobility induces a time-varying communication topology
$
\mathcal G^t = (\mathcal V,\mathcal E^t),
$ where $(i,j)\in\mathcal E^t$ if clients $i$ and $j$ are physically able to communicate at epoch $t$. The node set is fixed; the edge set changes due to motion, channel variation, or obstruction. This is the dynamic-topology extension beyond the fixed-neighborhood setting of \texttt{MURMURA}~\cite{rangwala2025evidential}, and it follows the standard time-varying graph model used in decentralized optimization~\cite{ye2022decentralized,wu2024topology}.

\subsection{Local Data and Personalized Objective}

Each client $i\in\mathcal V$ holds a private dataset $\mathcal D_i = \mathcal D_i^{tr} \cup \mathcal D_i^{val}$ drawn from a client-specific distribution $P_i$. The distributions $\{P_i\}$ are generally non-IID, so the system seeks not a single global model but a collection of personalized models $\Theta = \{\theta_1,\dots,\theta_N\}$ with $\theta_i\in\mathbb R^d$, following the personalization principle of \texttt{MURMURA}~\cite{rangwala2025evidential}. The local population risk at client $i$ is
$$
F_i(\theta_i) = \mathbb E_{(x,y)\sim P_i}\big[\ell(f(x;\theta_i),y)\big],
$$
and the population objective is
$$
\min_{\Theta}\; \sum_{i=1}^{N} F_i(\theta_i).
$$
Unlike conventional DFL, client $i$ cannot safely collaborate with every other client at every epoch: communication feasibility and topology trust are both time-varying.

\subsection{Adversary Model}
\label{sec:adversary}

Let $\mathcal H \subseteq \mathcal V$ denote the honest clients and $\mathcal B = \mathcal V \setminus \mathcal H$ the Byzantine clients. The Byzantine set is fixed over the training horizon, although Byzantine clients move like honest ones. Honest clients follow the prescribed protocol. Byzantine clients may behave arbitrarily, including issuing false edge-addition or edge-removal claims, replaying stale topology states, flooding the network with weak claims, or sending misleading model updates. The formulation therefore separates topology trust from model usefulness so that topology misinformation is filtered before collaboration decisions are made.

\subsection{Local State to Be Maintained}
\label{sec:local_state}

At every epoch $t$, each client $i$ must maintain four categories of local state, whose precise update rules are given in Section~\ref{sec:protocol}:
\begin{itemize}[leftmargin=1.4em,itemsep=0.2ex]
    \item \emph{Direct reachability and link reliability.} A one-hop neighbor set $N_i^{dir,t}$ from authenticated hello beacons and acknowledgments, and a scalar link-reliability estimate $\hat p_{ij}^t \in [0,1]$ tracking communication quality (\emph{not} topology trust) \cite{ye2022decentralized}.
    \item \emph{Signed topology claims and witness evidence.} Each locally observed topology event is emitted as a signed, versioned claim; received claims are collected in per-edge support and opposition buffers $\mathcal W_i^{+,t}(e), \mathcal W_i^{-,t}(e)$, which the protocol summarizes into an edge-confidence score $\chi_i^t(e)\in[0,1]$.
    \item \emph{Source-trust state.} For each peer $j$, a Beta-style evidence state $(\alpha_{ij}^t,\beta_{ij}^t)$ tracks the long-run reliability of $j$ as a topology reporter, from which the protocol derives a reputation $R_{ij}^t$, an epistemic uncertainty $U_{ij}^t$, and a topology-trust score $T_{ij}^{topo,t}\in[0,1]$ (the topology-side analogue of the model-side evidential trust of \texttt{MURMURA}~\cite{rangwala2025evidential}).
    \item \emph{Trusted local topology view.} By accepting an edge $e$ only when $\chi_i^t(e)$ and corroboration are sufficient, client $i$ maintains a confidence-weighted local view $\hat{\mathcal G}_i^t = (\mathcal V,\hat{\mathcal E}_i^t)$. The target is not exact global topology discovery (which would require $(2k{+}1)$-connectivity under $k$ Byzantine faults~\cite{nesterenko2009discovering}) but a locally sufficient view for safe collaboration.
\end{itemize}

\subsection{Trusted Feasible Set and Effective Learning Graph}

Given the local state above, the protocol must induce, at every epoch, a \emph{trusted feasible neighbor set} $\mathcal F_i^t \subseteq N_i^{dir,t}$ restricted to peers that are simultaneously reachable (sufficient $\hat p_{ij}^t$) and trustworthy as topology reporters (sufficient $T_{ij}^{\mathrm{topo},t}$), and an \emph{active collaborator set} $C_i^t \subseteq \mathcal F_i^t$ satisfying $|C_i^t| \le B$ within an edge budget $B$, as in budgeted decentralized personalized graph learning~\cite{kharrat2024decentralized}. The corresponding aggregation weights $\{w_{ij}^t\}$ define a row-stochastic, time-varying \emph{screened mixing matrix} $W^t = [w_{ij}^t]_{i,j=1}^N$ with
$$
w_{ij}^t = 0 \;\text{if}\; j\notin C_i^t\cup\{i\},\quad
\sum_{j} w_{ij}^t = 1,\quad w_{ij}^t \ge 0.
$$
Hence the effective learning graph is not the raw physical graph $\mathcal G^t$ but a client-specific, time-varying, trust-screened graph. This connects the formulation to standard analyses of decentralized learning over time-varying graphs, where topology enters through the support and spectral properties of the mixing matrices.

\subsection{Optimization Problem}

The problem addressed in this work is to jointly design four protocol rules:
\begin{enumerate}[leftmargin=2em,itemsep=0.2ex]
  \item the topology-claim processing rule, producing $\hat{\mathcal G}_i^t$;
  \item the source-trust update rule, producing $T_{ij}^{topo,t}$;
  \item the trusted-feasible-neighbor selection rule, producing $\mathcal F_i^t$ and $C_i^t$;
  \item the personalized aggregation rule, producing $W^t$ and $\theta_i^{t+1}$.
\end{enumerate}
Over a horizon of $T$ epochs, the goal is to minimize the time-averaged aggregate personalized risk
$$
\min_{\{\theta_i^t\},\,\Pi_{topo},\,\Pi_{collab}}\;
\frac{1}{T}\sum_{t=1}^{T}\sum_{i=1}^{N} \mathbb E\big[F_i(\theta_i^t)\big],
$$
subject, for every epoch $t$ and client $i$, to
$$
C_i^t \subseteq \mathcal F_i^t \subseteq N_i^{dir,t},\qquad |C_i^t|\le B,
$$
the row-stochastic mixing-matrix constraints above, and the requirement that topology information may influence collaboration only through the screened local topology state $(\hat{\mathcal G}_i^t, T_{ij}^{topo,t})$.

In words, the proposed method, Dynamic \texttt{MURMURA} with Trusted Topology (DMTT), must maintain a trustworthy, confidence-weighted topology view and perform personalized decentralized learning only over the resulting trusted feasible graph. Section~\ref{sec:protocol} specifies the four rules above and gives the equations defining $\hat p_{ij}^t$, $\chi_i^t(e)$, $T_{ij}^{topo,t}$, and $W^t$.

\section{The DMTT Protocol}
\label{sec:protocol}

The proposed method is a decentralized training protocol with an explicit topology-trust layer. Instead of trusting a topology claim on arrival, each node first grounds its network view in direct neighbor observations, then verifies signed claims for freshness and authenticity, accumulates corroborating and opposing witness evidence, updates long-term trust in each reporting source, and limits forwarding through trust thresholds, hop limits, and rate budgets. Only after this topology-filtering stage does the node invoke \texttt{MURMURA}-style model trust to decide which reachable peers should influence training.

The protocol contains six logical steps, but for clarity they are grouped into four algorithmic modules. Algorithm~\ref{alg:main} integrates direct reachability estimation, signed topology-claim generation, local model training, model exchange, and collaborator selection. Topology-claim processing and edge-confidence computation are handled by Algorithm~\ref{alg:claim}. Controlled forwarding is handled by Algorithm~\ref{alg:forward}. Source-trust updating is handled by Algorithm~\ref{alg:trust}.

\textbf{Local State at Node $i$:} Each node $i$ maintains local state comprising direct link quality, topology trust, witness evidence, and model-level collaboration scores. Algorithm~\ref{alg:main} gives the overall decentralized protocol, executed locally and in parallel at every epoch.

\begin{algorithm}[t]
\caption{DMTT: Dynamic \texttt{MURMURA} with Trusted Topology}
\label{alg:main}
\begin{algorithmic}[1]
\State Initialize local model $\theta_i^0$
\State Initialize direct-neighbor set $N_i^{dir}$
\State Initialize trust state $\{\alpha_{ij},\beta_{ij},T_{ij}^{topo}\}_j$
\State Initialize witness buffers $\mathcal{W}_i(e)$ and edge confidences $\chi_i(e)$
\State Initialize collaborator set $C_i$
\For{each epoch $t$ executed in parallel at all nodes}
    \For{each authenticated neighbor $j$ heard in epoch $t$}
        \State Add $j$ to $N_i^{dir,t}$
        \State $\hat{p}_{ij}^{t+1} \gets (1-\rho)\hat{p}_{ij}^{t} + \rho \cdot \mathrm{ack}_{ij}^{t}$
    \EndFor
    \State Remove stale neighbors from $N_i^{dir,t}$ if timeout expires
    \If{node $i$ observes addition or removal of edge $e=(u,v)$}
        \State Construct signed claim $c \gets (u,v,\sigma,\nu,\tau,\pi_u)$
        \State Broadcast $c$ to current direct neighbors
    \EndIf
    \For{each received claim $c$ about edge $e$}
        \State \Call{ProcessTopologyClaim}{$i,c,e,t$}
    \EndFor
    \For{each source $j$ with new evidence}
        \State \Call{UpdateSourceTrust}{$i,j,t$}
    \EndFor
    \For{each received claim $c$}
        \State \Call{ControlledForward}{$i,c,t$}
    \EndFor
    \State Perform $E$ local SGD steps to obtain $\theta_i^{t+\frac{1}{2}}$
    \State Exchange candidate models or updates with currently reachable, topology-screened neighbors
     \State Evaluate received peer models on local validation data
    \State Compute validation accuracy $a_{ij}^t$ and mean epistemic uncertainty $\bar{u}_{ij}^t$
    \State Compute \texttt{MURMURA}-style compatibility scores $s_{ij}^{model,t}$
    \For{each currently reachable and topology-screened neighbor $j$}
        \State $q_{ij}^t \gets \lambda_1 s_{ij}^{model,t} + \lambda_2 T_{ij}^{topo,t} + \lambda_3 \hat{p}_{ij}^{t} - \lambda_4 c_{ij}^{comm,t}$
    \EndFor
    \State $C_i^t \gets \operatorname{TopB}_j \; q_{ij}^t$
    \State Aggregate over $C_i^t$ and update $\theta_i^{t+1}$
\EndFor
\end{algorithmic}
\end{algorithm}

\subsection{Direct Reachability and Signed Topology Claims}

The protocol starts from direct observations rather than an assumed global topology. As shown in Algorithm~\ref{alg:main}, each node discovers direct neighbors through authenticated hello beacons and transport-level acknowledgments. For a direct link $(i,j)$, node $i$ updates a local reliability score in the spirit of unreliable link estimation in Soft-DSGD, but without assuming a global reliability matrix \cite{ye2022decentralized}:
\begin{equation}
\hat{p}_{ij}^{t+1} = (1-\rho)\hat{p}_{ij}^{t} + \rho \cdot \text{ack}_{ij}^{t}.
\end{equation}

Here, $\rho \in (0,1)$ is a smoothing parameter and $\text{ack}_{ij}^{t} \in \{0,1\}$ records recent communication success. This quantity captures link quality, not trust \cite{ye2022decentralized}. It provides the local connectivity view on which the later filtering stages build.

When a node observes a local topology event, it emits the signed claim $c$ defined in Section~\ref{sec:local_state}. For direct-neighbor claims, the preferred form is bilateral attestation, meaning both endpoints sign the same event when possible. One-sided claims are allowed but start with lower confidence. Each node may also periodically send a compact signed digest of its current neighbor list (similar to sketching \cite{rangwala2025sketchguard}).

\begin{algorithm}[t]
\caption{Topology claim processing and edge-confidence update}
\label{alg:claim}
\begin{algorithmic}[1]
\State Extract source $j$ and hop count $h_{ij}^t(e)$ from $c$
\If{signature invalid or version stale}
    \State Reject $c$ and return
\EndIf
\State Add source $j$ to witness buffer $\mathcal{W}_i(e)$
\State $\psi_{ij}^t(e) \gets T_{ij}^{topo,t}\exp(-\xi h_{ij}^t(e))$
\If{$c$ supports existence of edge $e$}
    \State Add $\psi_{ij}^t(e)$ to support pool
\Else
    \State Add $\psi_{ij}^t(e)$ to opposition pool
\EndIf
\State $S_i^{+,t}(e) \gets \sum_{j \in \mathcal{W}_i^{+,t}(e)} \psi_{ij}^t(e)$
\State $S_i^{-,t}(e) \gets \sum_{j \in \mathcal{W}_i^{-,t}(e)} \psi_{ij}^t(e)$
\State $\chi_i^t(e) \gets \dfrac{S_i^{+,t}(e)}{S_i^{+,t}(e)+S_i^{-,t}(e)+\varepsilon}$
\If{$c$ is an edge addition}
    \If{$\chi_i^t(e)\ge \tau_{add}$ and witness diversity conditions hold}
        \State Accept edge addition
    \EndIf
\Else
    \If{$1-\chi_i^t(e)\ge \tau_{add}$ and witness diversity conditions hold}
        \State Accept edge removal
    \EndIf
\EndIf
\end{algorithmic}
\end{algorithm}

\subsection{Topology Claim Processing and Edge Confidence}

Received topology claims are not committed immediately. Instead, as shown in Algorithm~\ref{alg:claim}, node $i$ stores each claim in a witness buffer and treats it as weighted evidence about the reported edge. This is the primary defense against malicious topology gossip.

When node $i$ receives a topology claim from source $j$, it computes the witness weight
\begin{equation}
\begin{aligned}
\psi_{ij}^t(e)
&= T_{ij}^{\text{topo},t}
   \exp\!\left(-\xi h_{ij}^t(e)\right) \\
&\quad \cdot \mathbf{1}\{\text{valid signature}\}
       \cdot \mathbf{1}\{\text{fresh version}\}.
\end{aligned}
\end{equation}
Here, $T_{ij}^{\text{topo},t}$ is trust in source $j$, $h_{ij}^t(e)$ is the hop count traveled by the claim before reaching $i$, and $\xi$ penalizes distant claims. Thus, a witness counts more if the source is trusted, the claim traveled fewer hops, the signature is valid, and the claim is fresh. Signed stale claims are ignored.

For each edge $e$, node $i$ maintains separate support and opposition totals:
\begin{align}
S_i^{+,t}(e) &= \sum_{j \in \mathcal{W}_i^{+,t}(e)} \psi_{ij}^t(e), \\
S_i^{-,t}(e) &= \sum_{j \in \mathcal{W}_i^{-,t}(e)} \psi_{ij}^t(e).
\end{align}
The opposition total is therefore the weighted sum of all witnesses that argue against the edge claim. From these totals, node $i$ defines edge confidence as
\begin{equation}
\chi_i^t(e) =
\frac{S_i^{+,t}(e)}{S_i^{+,t}(e)+S_i^{-,t}(e)+\varepsilon}.
\end{equation}

An edge addition is accepted only if three conditions hold:
\begin{itemize}[leftmargin=2em]
\item $\chi_i^t(e) \ge \tau_{add}$, where $\tau_{add}\in(0,1]$ is the \emph{edge-acceptance threshold},
\item support comes from at least $m$ distinct first-hop witnesses,
\item in high-assurance mode, support arrives through at least $k+1$ internally node-disjoint paths when the graph is connected enough to make that meaningful \cite{nesterenko2009discovering}.
\end{itemize}
An edge removal uses the symmetric rule with $1-\chi_i^t(e)$. Thus, a single liar may inject a claim, but that claim is not committed unless enough independent, trusted evidence accumulates.

\begin{algorithm}[t]
\caption{Controlled forwarding}
\label{alg:forward}
\begin{algorithmic}[1]
\State Extract source $j$ from $c$
\If{signature invalid or version stale}
    \State Drop $c$
\ElsIf{$T_{ij}^{topo,t} < \tau_{fwd}$}
    \State Drop $c$
\ElsIf{forwarding budget for source $j$ exhausted}
    \State Drop $c$
\ElsIf{hop count of $c \ge H$}
    \State Drop $c$
\Else
    \State Forward $c$ to eligible direct neighbors
    \State Decrement forwarding budget for source $j$
\EndIf
\end{algorithmic}
\end{algorithm}

\subsection{Controlled Forwarding}

Even a locally observed claim should not be forwarded without restriction. As shown in Algorithm~\ref{alg:forward}, a node forwards a topology claim only if the signature is valid, the version is newer than the local record, the trust score satisfies $T_{ij}^{\text{topo},t} \ge \tau_{\text{fwd}}$ (the \emph{forwarding trust threshold}), the forwarding budget for that source has not been exceeded \cite{gupta2025fully}, and the claim has not crossed a hop limit $H$.

The forwarding budget is important. Gupta et al. \cite{gupta2025fully} show that rate limiting and verification matter even before one considers learning quality. In the present design, repeated flooding of unconfirmed or stale claims contributes negative evidence in the trust update, so forwarding control not only limits immediate spread but also reduces the long-run influence of abusive sources.

\begin{algorithm}[t]
\caption{Source-trust update}
\label{alg:trust}
\begin{algorithmic}[1]
\State Observe evidence events $d_{ij}^t,c_{ij}^t,f_{ij}^t,x_{ij}^t,q_{ij}^t,z_{ij}^t$
\State $\alpha_{ij}^{t+1} \gets \lambda \alpha_{ij}^{t} + w_d d_{ij}^t + w_c c_{ij}^t + w_f f_{ij}^t$
\State $\beta_{ij}^{t+1} \gets \lambda \beta_{ij}^{t} + w_x x_{ij}^t + w_q q_{ij}^t + w_s z_{ij}^t$
\State $R_{ij}^{t+1} \gets \dfrac{\alpha_{ij}^{t+1}}{\alpha_{ij}^{t+1}+\beta_{ij}^{t+1}}$
\State $U_{ij}^{t+1} \gets \sqrt{\dfrac{\alpha_{ij}^{t+1}\beta_{ij}^{t+1}}
{(\alpha_{ij}^{t+1}+\beta_{ij}^{t+1})^2(\alpha_{ij}^{t+1}+\beta_{ij}^{t+1}+1)}}$
\State $T_{ij}^{topo,t+1} \gets R_{ij}^{t+1}\exp\!\left(-\eta \max\{0,U_{ij}^{t+1}-\tau_U\}\right)$
\end{algorithmic}
\end{algorithm}

\subsection{Source Trust Update}

The claim-processing stage in Algorithm~\ref{alg:claim} relies on the current trust score $T_{ij}^{\text{topo},t}$, so this quantity must be updated explicitly over time. As shown in Algorithm~\ref{alg:trust}, node $i$ maintains trust in each source $j$ with a Beta-style evidence tracker.

Positive evidence for $j$ comes from three events: directly observed claims from $j$ about incident edges later match local handshakes, claims from $j$ are corroborated by independent trusted witnesses, and claims from $j$ remain fresh and internally consistent over time. Negative evidence comes from three events: a claim from $j$ is contradicted by later direct observation, a claim from $j$ is contradicted by a trusted quorum, or $j$ floods stale or conflicting topology updates.

A concrete update is
\begin{align}
\alpha_{ij}^{t+1} &= \lambda \alpha_{ij}^t + w_d d_{ij}^t + w_c c_{ij}^t + w_f f_{ij}^t, \\
\beta_{ij}^{t+1}  &= \lambda \beta_{ij}^t + w_x x_{ij}^t + w_q q_{ij}^t + w_s z_{ij}^t,
\end{align}
where $\lambda \in (0, 1]$ is a forgetting factor, $d$ is direct confirmation, $c$ is corroboration, $f$ is freshness consistency, $x$ is contradiction, $q$ is spam or rate violation, and $z$ is severe conflict with the eventually committed local view.

The posterior mean reputation is
\begin{equation}
R_{ij}^{t+1} =
\frac{\alpha_{ij}^{t+1}}{\alpha_{ij}^{t+1} + \beta_{ij}^{t+1}}.
\end{equation}
The epistemic uncertainty of this reputation is taken from the Beta variance:
\begin{equation}
U_{ij}^{t+1} =
\sqrt{
\frac{\alpha_{ij}^{t+1}\beta_{ij}^{t+1}}
{(\alpha_{ij}^{t+1}+\beta_{ij}^{t+1})^2(\alpha_{ij}^{t+1}+\beta_{ij}^{t+1}+1)}
}.
\end{equation}
The final topology trust score mirrors \texttt{MURMURA}’s uncertainty penalization principle \cite{rangwala2025evidential}: the reputation mean is penalized when epistemic uncertainty is high,
\begin{equation}
T_{ij}^{\text{topo},t+1}
=
R_{ij}^{t+1} \cdot
\exp\!\left(-\eta \max\{0, U_{ij}^{t+1}-\tau_U\}\right).
\end{equation}

A node is trusted not only when its claims are often correct, but when that correctness is established with enough evidence. A malicious node that alternates between truthful and false claims may retain a moderate mean reputation, but its uncertainty remains high, keeping its topology influence small.

\subsection{Distributed Local Training and Trust-Aware Collaborator Selection}
\label{sec:collab_score}

The previous modules determine which topology information is credible and which neighbors remain eligible for collaboration. The learning procedure itself is decentralized and runs in parallel across nodes. As shown in Algorithm~\ref{alg:main}, each node performs local training at every epoch, exchanges models or updates only with currently reachable neighbors that survive topology filtering, and then selects a sparse collaborator set for personalized aggregation.

At epoch $t$, node $i$ first performs $E$ local optimization steps on its own data to obtain an intermediate model
\begin{equation}
\theta_i^{t+\frac{1}{2}} = \mathrm{LocalUpdate}(\theta_i^t,\mathcal{D}_i,E),
\end{equation}
where $\mathcal{D}_i$ is the local dataset. After topology filtering, node $i$ exchanges candidate models or updates only with currently reachable and topology-screened neighbors. For those neighbors, it computes the \texttt{MURMURA} model compatibility score $s_{ij}^{\text{model},t}$ from local validation accuracy and epistemic uncertainty \cite{rangwala2025evidential}. Let $a_{ij}^t$ denote the validation accuracy of peer $j$'s model on node $i$'s local validation data, and let $\bar{u}_{ij}^t$ denote the corresponding mean epistemic uncertainty. Following \texttt{MURMURA}, the epistemic uncertainty for a $K$-class evidential model is
\begin{equation}
u = \frac{K}{S},
\end{equation}
where $S=\sum_{k=1}^K \alpha_k$ is the Dirichlet strength. Averaging this quantity over node $i$'s validation samples for peer $j$ gives $\bar{u}_{ij}^t$.

The base compatibility score is then defined as
\begin{equation}
s_{ij,\text{base}}^t
=
\left(1-\bar{u}_{ij}^t\right)
\left(w_a a_{ij}^t + (1-w_a)\right),
\end{equation}
where $w_a \in [0,1]$ controls the contribution of validation accuracy.

The final \texttt{MURMURA}-style compatibility score applies an uncertainty penalty when the mean uncertainty exceeds a threshold $\tau_u$:
\begin{equation}
s_{ij}^{\text{model},t}
=
\begin{cases}
s_{ij,\text{base}}^t \exp\!\left(-(\bar{u}_{ij}^t-\tau_u)\right), & \bar{u}_{ij}^t > \tau_u, \\[4pt]
s_{ij,\text{base}}^t, & \bar{u}_{ij}^t \le \tau_u.
\end{cases}
\end{equation}

The final collaboration score is
\begin{equation}
q_{ij}^t =
\lambda_1 s_{ij}^{\text{model},t}
+ \lambda_2 T_{ij}^{\text{topo},t}
+ \lambda_3 \hat{p}_{ij}^t
- \lambda_4 c_{ij}^{\text{comm},t}.
\end{equation}
The collaborator set is then
\begin{equation}
C_i^t = \operatorname{TopB}_j \; q_{ij}^t,
\end{equation}
subject to an edge budget $B$, as in budgeted decentralized personalized graph learning \cite{kharrat2024decentralized}. Node $i$ aggregates only the intermediate models or updates received from neighbors in $C_i^t$, together with its own model, using \texttt{MURMURA}-style trust-aware weighting:
\begin{equation}
\theta_i^{t+1}
=
w_{ii}^t \theta_i^{t+\frac{1}{2}}
+
\sum_{j \in C_i^t} w_{ij}^t \theta_j^{t+\frac{1}{2}},
\qquad
\sum_{j \in C_i^t \cup \{i\}} w_{ij}^t = 1,
\end{equation}
where the aggregation weights depend on the selected collaboration scores and the \texttt{MURMURA} trust-normalization rule. The collaborator set $C_i^t$ in this final step is the trust-screened set: a candidate $j$ enters $C_i^t$ only if it passes the model-compatibility gate $s_{ij}^{\mathrm{model},t} \ge \gamma\, s_{ii}^{\mathrm{model},t}$, the topology-trust gate $T_{ij}^{\mathrm{topo},t}\ge\tau_{\mathrm{trust}}$, and a lightweight norm-sanity check $\lVert\theta_j^{t+\frac12}\rVert \le \kappa_{\mathrm{norm}}\lVert\theta_i^{t+\frac12}\rVert$ that rejects large-norm model-poisoning outliers; combined with the bounded-domain Assumption~9, this gate directly enforces the bounded-Byzantine-norm condition of Assumption~8 with $\Theta_B = \kappa_{\mathrm{norm}} R$. Crucially, this screening enters the \emph{aggregation weights}, not merely the choice of which peers to contact: a peer that survives contact but fails screening receives weight zero, which is what drives the surviving Byzantine weight $\delta_{\max}$ to zero in practice.

This final stage keeps the topology and model layers distinct: $T_{ij}^{\text{topo},t}$ answers whether topology information from $j$ should be believed, $\hat{p}_{ij}^t$ whether the link is currently usable, and $s_{ij}^{\text{model},t}$ whether $j$ is useful for personalization. All nodes train in parallel, but only credible, reachable, and useful neighbors influence the local model.

\section{Distributed System Implementation}
\label{sec:implementation}

The theoretical protocol described in Section~\ref{sec:protocol} requires a concrete execution substrate: independent client processes that exchange messages, maintain local state, and synchronise rounds without any central coordinator.
This section describes the system architecture constructed to run DMTT experiments on real hardware, and to serve as a reusable open-source extension of the \texttt{MURMURA} framework.

\subsection{Architecture Design Principles} Three principles guided our design. \paragraph{\textbf{No coordinator in the learning path.}} A central coordinator introduces a single point of failure and contradicts the DFL threat model, since a compromised coordinator could suppress or fabricate round signals, undermining the decentralised guarantee. To eliminate this dependency, round $k$ at node $i$ begins at the absolute wall-clock time \begin{equation} t_k = t_{\mathrm{start}} + k \cdot \Delta, \label{eq:wall_clock} \end{equation} where $t_{\mathrm{start}}$ is a shared epoch announced once at experiment start and $\Delta$ is the round budget. Each node independently sleeps until $t_k$ and then proceeds without waiting for any signal. A passive monitor, implemented as a PULL-only process, collects per-round accuracy and loss metrics without sending any messages. Because it has no send socket, it cannot influence the learning protocol, and its failure leaves training unaffected. 

\paragraph{\textbf{ZeroMQ P2P transport.}} Model weights are exchanged peer-to-peer using ZeroMQ PUSH/PULL socket pairs. Each node $i$ binds one PULL socket (its receiving endpoint) and maintains one PUSH socket per current collaborator. This design avoids the fixed process-group requirement of \texttt{torch.distributed} and permits the dynamic collaborator set $C_i^t$ to change every round without re-initialisation. For single-machine experiments, all sockets use IPC transport; for multi-machine deployments, the same code switches to TCP by changing a single configuration field.

\paragraph{\textbf{Deterministic shared mobility model.}} The time-varying graph $\mathcal{G}^t$ must be consistent across all node processes without inter-process communication. This is achieved by defining $\mathcal{G}^t$ through a random-walk mobility model parameterised by a shared seed: given the same seed, every process computes identical node positions and therefore identical edge sets at every round.

\subsection{ZeroMQ Socket Layout}
\label{sec:zmq_layout}

\begin{figure*}[t]
\centering
\includegraphics[width=\textwidth]{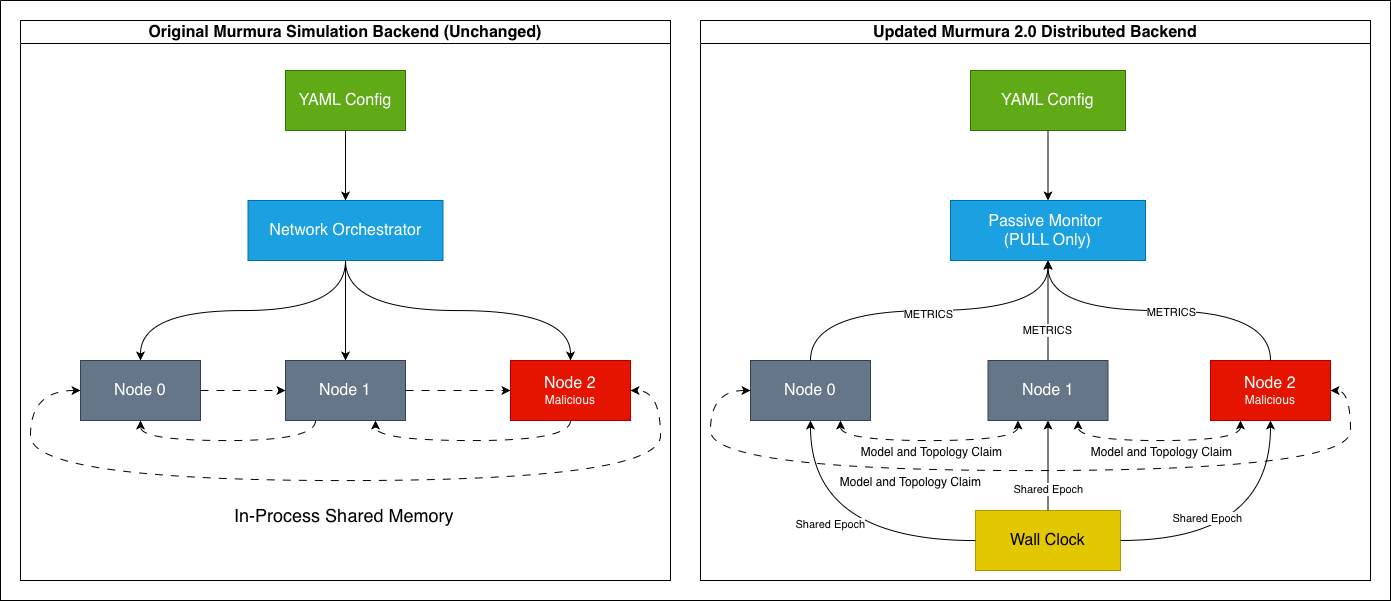}
\caption{Updated \texttt{MURMURA} architecture. The unchanged simulation backend (single OS process, shared memory) and the new distributed backend, in which each client and the passive monitor run as independent OS processes. Model weights travel via \texttt{MODEL\_STATE} PUSH/PULL pairs; DMTT nodes additionally exchange signed neighbourhood observations as \texttt{TOPO\_CLAIM} messages. Round boundaries are set by a shared wall-clock epoch; no coordinator is involved. Byzantine nodes send both poisoned model weights and falsified topology claims.}
\label{fig:architecture}
\end{figure*}

Each node process uses three logical channels, as shown in Fig.~\ref{fig:architecture}.

\begin{itemize}[leftmargin=2em]
  \item \textbf{PULL (bind):} Receives \texttt{MODEL\_STATE} and \texttt{TOPO\_CLAIM} messages from neighbours. A single socket handles both message types, distinguished by a 1-byte type field in the header.
  \item \textbf{PUSH $\times |C_i^t|$ (connect):} Sends \texttt{MODEL\_STATE} and \texttt{TOPO\_CLAIM} to each current collaborator. Sockets are created on first use so that the collaborator set can change without re-binding.
  \item \textbf{PUSH (monitor, connect):} Sends \texttt{METRICS} to the passive monitor after each round evaluation. The monitor never replies.
\end{itemize}

All messages use a two-frame ZMQ multipart format:
\begin{equation*}
  \underbrace{[\,\text{msg\_type}\,(1\text{ B})\;\|\;\text{sender\_id}\,(4\text{ B})\,]}_{\text{frame 0 (header)}}
  \;\Big|\;
  \underbrace{[\,\text{payload}\,]}_{\text{frame 1}}
\end{equation*}
Three message types are defined: \texttt{MODEL\_STATE} (serialised PyTorch state dict), \texttt{METRICS} (pickled evaluation dict containing a \texttt{round\_idx} key for the monitor's ordering buffer), and \texttt{TOPO\_CLAIM} (pickled neighbourhood assertion used exclusively by DMTT nodes).

\subsection{Mobility Model}
\label{sec:mobility_impl}

Client mobility is modelled as a bounded random walk on a 2-D torus of side length $A$.
At each round, the position of client $i$ is updated as
\begin{equation}
  \mathbf{r}_i^{t+1} = \bigl(\mathbf{r}_i^t + \boldsymbol{\delta}_i^t\bigr) \bmod A,
  \label{eq:random_walk}
\end{equation}
where $\boldsymbol{\delta}_i^t \sim \mathrm{Uniform}([-v_{\max}, v_{\max}]^2)$ is an independent per-round displacement drawn from a shared random generator initialised with the shared seed.
An edge $(i,j) \in \mathcal{E}^t$ exists whenever the torus distance between $\mathbf{r}_i^t$ and $\mathbf{r}_j^t$ falls below the communication range $r_c$:
\begin{equation}
  (i,j)\in\mathcal{E}^t \iff d_{\mathrm{torus}}(\mathbf{r}_i^t,\mathbf{r}_j^t) < r_c.
  \label{eq:edge_condition}
\end{equation}
An optional connectivity guarantee connects any isolated node to its nearest peer, ensuring that Assumption~4 (uniform connectivity) is not violated by unlucky mobility realisations.

Because both the initial positions and the displacement sequence are derived deterministically from the shared seed, every node process reconstructs $\mathcal{G}^t$ locally without any out-of-band communication, at the cost of maintaining one random-generator state per round.

\subsection{DMTTNodeProcess Round Protocol}

The \texttt{DMTTNodeProcess} class instantiates Algorithm~\ref{alg:main} as a concrete ZeroMQ process. Each round it: sleeps until the wall-clock deadline $t_k$; trains locally; constructs and pushes \texttt{MODEL\_STATE} + \texttt{TOPO\_CLAIM} to the previous collaborator set; collects responses until the round boundary; updates link-reliability EMAs; evaluates received models to compute $s_{ij}^{\mathrm{model},t}$; applies the oracle-free self-inclusion check---if $j$ communicated with $i$ this round then $j$'s claim must include $i$---to accumulate $d_{ij}^t$ (confirmations) and $x_{ij}^t$ (contradictions) for the Beta trust update; applies the trust-screened aggregation rule of Section~\ref{sec:collab_score}; updates $C_i^t$; and pushes evaluation metrics to the passive monitor. The dominant per-round overhead relative to vanilla NodeProcess is $|C_i^{t-1}|$ forward passes for foreign-model evaluation, scaling as $\mathcal{O}(|C_i^{t-1}| \cdot |\mathcal{D}_i^{\mathrm{test}}| / b)$.

\subsection{Topology-Liar Attack Implementation}

The \texttt{TopologyLiarAttack} concretises the adversary model of Section~\ref{sec:adversary}.
A Byzantine node $b$ mounts an \emph{omission attack}: it sends a falsified TOPO\_CLAIM containing only other Byzantine coalition members and no honest neighbours,
\begin{equation}
  \widehat{\mathcal{N}}_b^t = \mathcal{B} \setminus \{b\},
  \label{eq:liar_claim}
\end{equation}
where $\mathcal{B}$ is the set of all Byzantine nodes.
The omission is detectable without any shared topology knowledge. Because node $i$ sent a \texttt{MODEL\_STATE} to $b$ this round, the directed link $(i,b)$ is established at the transport layer; consequently $i$ is entitled to expect its own identity in $b$'s claim. Since $i\notin\widehat{\mathcal{N}}_b^t$, the self-inclusion check fires a contradiction $x_{ib}^t \mathrel{+}= 1$ at every round, gradually raising $\beta_{ib}$ and suppressing $T_{ib}^{\mathrm{topo},t}$ below the gate threshold $\tau_{\mathrm{trust}}$.

In addition to topology falsification, the liar optionally applies a Gaussian model-state attack (configurable via \texttt{model\_attack\_type: gaussian}), matching the combined threat model described in Section~\ref{sec:adversary}.

\section{Convergence Analysis and Byzantine Bounds}
\label{sec:theory}

This section presents the theoretical analysis of DMTT. The goal is not to prove exact Byzantine topology discovery. Instead, the analysis focuses on the learning dynamics induced by the trusted feasible graph maintained by each client. The presented analysis is consistent with existing works on decentralized optimization over time-varying graphs \cite{koloskova2020unified, nedic2014distributed, 7405263}, topology--heterogeneity coupling \cite{le2023refined}, and Byzantine-resilient decentralized learning \cite{he2022byzantine, 10208131}. All proofs are deferred to the appendix.

\subsection{Effective collaboration matrix}

Continuing the notation of Sections~\ref{sec:problem}--\ref{sec:protocol}: $W^t=[w_{ij}^t]_{i,j=1}^N$ is the screened mixing matrix induced by the active collaborator sets $\{C_i^t\}$, with $w_{ij}^t=0$ whenever $j\notin C_i^t\cup\{i\}$. DMTT is thus decentralized learning over a time-varying screened graph.

In personalized FL, perfect screening severs edges between clients with highly dissimilar data, partitioning the network into disjoint components of compatible clients. To analyze convergence under perfect screening (Theorems 1 and 2), we focus on an arbitrary isolated component $\mathcal{C}$; letting $N = |\mathcal{C}|$ we define the component-average iterate $\bar\theta^t = \tfrac{1}{N}\sum_{i\in\mathcal C}\theta_i^t$ and the component objective $f(x) = \tfrac{1}{N}\sum_{i\in\mathcal C} F_i(x)$. This correctly reflects the personalized DFL setting: the protocol filters the graph so that clients optimize their compatible component objective $f$ rather than the true global average \cite{koloskova2020unified}.

\subsection{Standing assumptions and theory-local notation}
\label{sec:standing_assumptions}

Beyond the symbols of Sections~\ref{sec:problem}--\ref{sec:protocol}, the analysis uses $\widetilde W^t$ (ideal honest mixing matrix), $p\in(0,1]$ (consensus-rate parameter of the screened graph sequence \cite{koloskova2020unified}), $\tau_{\mathrm{scr}}^2$ (screened neighborhood heterogeneity), $\Delta_{\mathrm{scr}}^2$ (screening-error level), $\delta_{\max}$ (maximum surviving Byzantine weight after screening), $G$ (uniform stochastic-gradient second-moment bound, introduced with Assumption~6), and $\Theta_B$ (uniform Byzantine-message norm bound, introduced with Assumption~8), with $M := G + L\Theta_B$.

We collect below the four assumptions required by the consensus and descent lemmas (deferred to the appendix); Theorems and propositions later in the section introduce additional assumptions only where they are first needed. The link-reliability update, witness-weight rule, Beta trust update, edge-confidence rule, and collaborator score are treated as part of the DMTT mechanism, not as standalone assumptions.

\paragraph{Assumption 1: Smooth local objectives.}
For each client $i$, the function $F_i$ is $L$-smooth:
$$
\|\nabla F_i(x)-\nabla F_i(y)\| \le L\|x-y\|,
\qquad
\forall x,y\in\mathbb R^d.
$$

\paragraph{Assumption 2: Unbiased stochastic gradients with bounded variance.}
Each client has access to a stochastic gradient oracle $g_i(x,\xi_i)$ such that
$$
\mathbb E[g_i(x,\xi_i)] = \nabla F_i(x),
\quad
\mathbb E\|g_i(x,\xi_i)-\nabla F_i(x)\|^2 \le \sigma^2.
$$

\paragraph{Assumption 3: Doubly stochastic screened mixing.}
At each epoch $t$, the DMTT aggregation step produces a doubly stochastic matrix $W^t = [w_{ij}^t]$ satisfying
$$
W^t\mathbf 1 = \mathbf 1,
\quad
\mathbf 1^\top W^t = \mathbf 1^\top,
\quad
w_{ij}^t \ge 0,
$$
with the screened-support property $w_{ij}^t = 0$ if $j\notin C_i^t\cup\{i\}$. In practice, this can be enforced by symmetrizing the trust-weighted scores, or by adopting a Metropolis--Hastings reweighting over the screened graph. A fully asymmetric collaboration matrix is also supported, but would require a push-sum or gradient-tracking treatment in place of the analysis below \cite{nedic2014distributed, 7405263}.

\paragraph{Assumption 4: Consensus contraction of the screened honest graph.}
Let $J = \tfrac{1}{N} \mathbf{1} \mathbf{1}^\top$. There exists $p \in (0,1]$ such that for every $t \ge 0$ and every $X \in \mathbb{R}^{N \times d}$ with $\mathbf{1}^\top X = 0$,
$$
\| (W^t - J) X \|_F^2 \le (1-p) \|X\|_F^2.
$$
The constant $p$ is the consensus-rate parameter of the screened graph sequence. This per-step contraction is stronger than the $B_{\mathrm{conn}}$-window connectivity formulation of \cite{koloskova2020unified}; the extension to the window case follows the same template with constants scaling as $1/p \to B_{\mathrm{conn}}/p$.

The convergence proof relies on standard structural invariants, a consensus contraction bound, and a descent lemma over the screened graph. These intermediate lemmas are detailed in the appendix.

\subsection{Main theorem under perfect screening}

\emph{Perfect screening} means that the topology layer correctly excludes all Byzantine clients ($\delta_{\max} = 0$) and produces the ideal honest mixing matrix ($\Delta_{\mathrm{scr}} = 0$). Under this condition, DMTT reduces to time-varying decentralized SGD on the screened honest graph, with one local SGD step per round ($E = 1$). One additional assumption is needed beyond the standing ones: heterogeneity must be uniformly bounded across screened collaborators.

\paragraph{Assumption 5: Bounded screened heterogeneity.}
There exists $\tau_{\mathrm{scr}}^2 \ge 0$ such that, for all $x \in \mathbb{R}^d$,
$$
\frac{1}{N} \sum_{i \in \mathcal{C}} \big\| \nabla F_i(x) - \nabla f(x) \big\|^2 \le \tau_{\mathrm{scr}}^2.
$$
The subscript ``scr'' indicates that this constant is measured over the population that participates in screened aggregation under DMTT. Corollary~\ref{cor:screening} discusses how \texttt{MURMURA}-style screening can reduce this constant relative to the heterogeneity over the raw feasible graph \cite{le2023refined}.

\begin{theorem}[Convergence under perfect screening, $E = 1$]
\label{thm:main}
Suppose Assumptions~1--5 hold, screening is perfect ($\delta_{\max} = 0$ and $\Delta_{\mathrm{scr}}^2 = 0$), and the iterates are initialized at consensus ($\theta_i^0 = \bar\theta^0$ for all $i$). If the stepsize satisfies
$$
\eta \le \frac{p}{8L},
$$
then for any horizon $T \ge 1$,
\begin{equation}
\label{eq:main_bound}
\begin{aligned}
\frac{1}{T} \sum_{t=0}^{T-1} \mathbb{E} \big\| \nabla f(\bar\theta^t) \big\|^2
\le\;& \frac{4 \big(f(\bar\theta^0) - f^\star\big)}{\eta T}
+ \frac{2 L \eta\, \sigma^2}{N} \\
& + \frac{64\, L^2\, \eta^2 \big(\tau_{\mathrm{scr}}^2 + \sigma^2\big)}{p^2},
\end{aligned}
\end{equation}
where $f^\star = \inf_x f(x)$. The proof is given in the appendix.
\end{theorem}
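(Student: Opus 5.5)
Under perfect screening the iteration on the component $\mathcal C$ is plain decentralized SGD with time-varying doubly stochastic mixing: writing $X^t\in\mathbb R^{N\times d}$ for the stacked iterates and $G^t$ for the stacked stochastic gradients $g_i(\theta_i^t,\xi_i^t)$, we have $X^{t+1}=W^t(X^t-\eta G^t)$. The proof follows the standard two-lemma template of \cite{koloskova2020unified}: a descent inequality for the average iterate and a consensus recursion, combined by a weighted sum. By Assumption~3, $\mathbf 1^\top W^t=\mathbf 1^\top$, so the average obeys the exact centralized-type recursion $\bar\theta^{t+1}=\bar\theta^t-\frac{\eta}{N}\sum_{i}g_i(\theta_i^t,\xi_i^t)$. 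The consensus error is $\Xi^t:=\frac1N\|X^t-\mathbf 1(\bar\theta^t)^\top\|_F^2=\frac1N\sum_i\|\theta_i^t-\bar\theta^t\|^2$, and $\Xi^0=0$ by the consensus initialization.

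\textbf{Step 1 (descent).} Apply $L$-smoothness of $f$ (inherited from Assumption~1) to the average update, and take conditional expectation using Assumption~2. The variance of the averaged noise gives $\frac{\sigma^2}{N}$ by independence across clients. The bias term $\langle\nabla f(\bar\theta^t),\frac1N\sum_i\nabla F_i(\theta_i^t)\rangle$ is handled with $-\langle a,b\rangle=\frac12(\|a-b\|^2-\|a\|^2-\|b\|^2)$ and $\|\frac1N\sum_i(\nabla F_i(\theta_i^t)-\nabla F_i(\bar\theta^t))\|^2\le L^2\Xi^t$. With $\eta\le 1/(4L)$ (implied by $\eta\le p/(8L)$) this yields
\[
\mathbb E f(\bar\theta^{t+1})\le \mathbb E f(\bar\theta^t)-\frac{\eta}{4}\mathbb E\|\nabla f(\bar\theta^t)\|^2+\frac{\eta L^2}{2}\mathbb E\,\Xi^t+\frac{L\eta^2\sigma^2}{2N}.
\]

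\textbf{Step 2 (consensus).} Write $X^{t+1}-\mathbf 1\bar\theta^{t+1\top}=(W^t-J)(X^t-\mathbf 1\bar\theta^{t\top}-\eta(G^t-\mathbf 1\bar g^{t\top}))$, which uses $W^tJ=JW^t=J$, and apply Assumption~4 together with $\|a+b\|^2\le(1+\frac p2)\|a\|^2+(1+\frac2p)\|b\|^2$. This gives $\mathbb E\Xi^{t+1}\le(1-\frac p2)\mathbb E\Xi^t+\frac{3\eta^2}{p}\cdot\frac1N\mathbb E\|G^t-\mathbf 1\bar g^{t\top}\|_F^2$. The gradient-deviation term is bounded, by splitting into noise, $L$-Lipschitz drift, and heterogeneity (Assumption~5), by $3(\sigma^2+L^2\Xi^t+\tau_{\mathrm{scr}}^2)$ up to constants. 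Under $\eta\le p/(8L)$ the $L^2\eta^2\Xi^t/p$ feedback is at most a $p/4$ fraction, so $\mathbb E\Xi^{t+1}\le(1-\frac p4)\mathbb E\Xi^t+\frac{c\,\eta^2}{p}(\tau_{\mathrm{scr}}^2+\sigma^2)$. Unrolling from $\Xi^0=0$ gives $\mathbb E\Xi^t\le\frac{4c\eta^2}{p^2}(\tau_{\mathrm{scr}}^2+\sigma^2)$ uniformly in $t$.

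\textbf{Step 3 (combine).} Sum the descent inequality over $t=0,\dots,T-1$, telescope with $f\ge f^\star$, divide by $\eta T/4$, and insert the uniform consensus bound. This gives the three terms of \eqref{eq:main_bound}: $\frac{4(f(\bar\theta^0)-f^\star)}{\eta T}$, $\frac{2L\eta\sigma^2}{N}$, and $2L^2\cdot\frac{4c\eta^2}{p^2}(\tau_{\mathrm{scr}}^2+\sigma^2)$. The main obstacle is bookkeeping in Step~2: choosing the Young's-inequality split and the absorption of the $L^2\Xi^t$ drift so that the final constant is exactly $64$ under the stepsize $\eta\le p/(8L)$. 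One must also be careful that the noise term in the consensus recursion appears as $\sigma^2$ rather than $\sigma^2/N$, since it lives in the non-averaged directions. If the constants overshoot, I would loosen the split to $(1+\frac p2)$ versus $(1+\frac 2p)$ differently, or bound the noise directly by $\sigma^2$ without the factor $3$.
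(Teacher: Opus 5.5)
Your proposal is correct, and it takes a genuinely different route from the paper at the key step. The paper bounds the centered gradient matrix crudely, $\mathbb{E}\|(I-J)G^t\|_F^2 \le \mathbb{E}\|G^t\|_F^2$. This puts a term $4N\,\mathbb{E}\|\nabla f(\bar\theta^t)\|^2$ into the consensus driving term, so its consensus bound is only available in summed form. The paper therefore sums the recursion of Lemma~2 over $t$, couples it to the telescoped descent inequality, and absorbs the resulting $\frac{16L^2\eta^3}{p^2}\sum_t\mathbb{E}\|\nabla f(\bar\theta^t)\|^2$ into the left-hand side; that absorption is where $\eta\le p/(8L)$ is used. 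You instead center at $\nabla f(\bar\theta^t)$, which is equivalent to using $\|(I-J)Y\|_F\le\|Y\|_F$ after subtracting $\mathbf 1\nabla f(\bar\theta^t)^\top$. The driving term is then only noise, drift and heterogeneity, and the stepsize condition is needed only to dominate the $L^2\Xi^t$ feedback. You obtain a per-round bound $\mathbb{E}\Xi^t=O(\eta^2(\tau_{\mathrm{scr}}^2+\sigma^2)/p^2)$, which is stronger than the paper's averaged bound and avoids the coupled absorption entirely. What the paper's summed-and-absorbed template buys is robustness: it still works under a relaxed heterogeneity assumption such as $\frac1N\sum_i\|\nabla F_i(x)-\nabla f(x)\|^2\le\tau^2+M\|\nabla f(x)\|^2$, where a uniform-in-$t$ consensus bound is not available.

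On the constants, your literal choices overshoot. With the $3/p$ from Young, the factor $3$ on all three pieces, and $\eta/4$ in the descent, you get $c=9$ and $\mathbb{E}\Xi^t\le 36\eta^2(\tau_{\mathrm{scr}}^2+\sigma^2)/p^2$. The final third term is then $72L^2\eta^2(\tau_{\mathrm{scr}}^2+\sigma^2)/p^2$, which exceeds $64$. Removing the factor $3$ only from the noise does not fix this, because the $\tau_{\mathrm{scr}}^2$ coefficient stays at $72$. Any one of the following repairs works:
\begin{itemize}
\item Use the exact value $(1-p)(1+\frac2p)\le\frac2p$, as the paper's Lemma~2 does. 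This gives $\mathbb{E}\Xi^t\le 24\eta^2(\tau_{\mathrm{scr}}^2+\sigma^2)/p^2$ and a final constant of $48$.
\item Keep the $\eta/2$ that your Step~1 actually delivers for $\eta\le 1/L$. This gives $\frac{2(f^0-f^\star)}{\eta T}+\frac{L\eta\sigma^2}{N}+\frac{36L^2\eta^2(\tau_{\mathrm{scr}}^2+\sigma^2)}{p^2}$, which is termwise below the stated bound.
\item Use conditional unbiasedness to split off the noise orthogonally. The deviation bound becomes $\sigma^2+2L^2\Xi^t+2\tau_{\mathrm{scr}}^2$, and the final constant is at most $48$.
\end{itemize}
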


\begin{remark}[Interpretation of the rate]
The three terms in~\eqref{eq:main_bound} are the optimization error from initialization ($\mathcal O(1/T)$), the stochastic-noise floor ($\propto \sigma^2/N$), and the heterogeneity-and-topology term ($\propto (\tau_{\mathrm{scr}}^2 + \sigma^2)/p^2$). Choosing $\eta = \min\!\big\{\tfrac{p}{8L},\, \sqrt{\tfrac{N (f^0 - f^\star)}{L \sigma^2 T}}\big\}$ yields an asymptotic convergence rate of $\mathcal{O}\!\big(\frac{1}{\sqrt{N T}} + \frac{N(\tau_{\mathrm{scr}}^2 + \sigma^2)}{p^2 T}\big)$. While state-of-the-art analyses can tighten the second term to $\mathcal{O}(1/(pT))$ using complex potential functions \cite{koloskova2020unified}, this standard analysis suffices to explicitly show how the DMTT screening filter accelerates convergence by reducing the topological heterogeneity $\tau_{\mathrm{scr}}^2$.
\end{remark}

\begin{remark}
In the appendix, we extend this result to $E \ge 1$ local steps (Theorem 2) under an additional bounded-gradient assumption ($\mathbb{E}\|g_i\|^2 \le G^2$); local updates introduce an $\mathcal{O}(\eta^2 L^2 G^2 / p^2)$ drift penalty but reduce the stochastic noise floor to $\sigma^2 / (NE)$.
\end{remark}

\subsection{Why screening helps}
\label{sec:why_screening}
The bound~\eqref{eq:main_bound} shows that convergence depends heavily on the data heterogeneity $\tau_{\mathrm{scr}}^2$. DMTT explicitly shrinks this constant by severing edges to model-incompatible peers (large $\bar u_{ij}^t$). While dropping edges may reduce the graph's consensus rate $p$, the reduction in heterogeneity typically dominates.

\begin{corollary}[Screening improves the bound]
\label{cor:screening}
Let $\mathcal{V}_{\mathrm{raw}}$ be the full, unfiltered network with objective $f_{\mathrm{raw}}$, resulting in global heterogeneity $\tau_{\mathrm{raw}}^2$ and consensus rate $p_{\mathrm{raw}}$. Suppose the DMTT screening mechanism isolates a highly compatible component $\mathcal{C}$, shifting the target objective to $f$ as defined above. If the screening yields
$$
\tau_{\mathrm{scr}}^2 \le \tau_{\mathrm{raw}}^2 \quad \text{and} \quad p_{\mathrm{scr}} \ge p_{\mathrm{raw}} - \epsilon
$$
for some $\epsilon \ge 0$, then the heterogeneity-and-topology term in~\eqref{eq:main_bound} satisfies
$$
\frac{64 L^2 \eta^2 \big(\tau_{\mathrm{scr}}^2 + \sigma^2\big)}{p_{\mathrm{scr}}^2} \le \frac{64 L^2 \eta^2 \big(\tau_{\mathrm{raw}}^2 + \sigma^2\big)}{(p_{\mathrm{raw}} - \epsilon)^2}.
$$
By filtering out incompatible peers, DMTT deliberately abandons the global stationary point in favor of a personalized component stationary point, achieving a tighter convergence bound as long as the reduction in data heterogeneity outpaces the loss in subgraph connectivity.
\end{corollary}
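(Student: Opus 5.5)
The corollary follows from two monotonicity facts about the map $(\tau^2,p)\mapsto 64L^2\eta^2(\tau^2+\sigma^2)/p^2$ on the region $\tau^2\ge 0$, $p>0$. The plan is to chain two inequalities: first replace $\tau_{\mathrm{scr}}^2$ by $\tau_{\mathrm{raw}}^2$ in the numerator, then replace $p_{\mathrm{scr}}$ by $p_{\mathrm{raw}}-\epsilon$ in the denominator. No further probabilistic or optimization argument is needed, because Theorem~\ref{thm:main} has already produced the bound~\eqref{eq:main_bound}. The corollary only compares one of its terms across two parameter regimes with the same stepsize $\eta$, the same $L$ and the same $\sigma^2$.

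\textbf{Step 1 (numerator).} Since $64L^2\eta^2/p_{\mathrm{scr}}^2>0$ and $\tau_{\mathrm{scr}}^2\le\tau_{\mathrm{raw}}^2$, we get
\[
\frac{64L^2\eta^2(\tau_{\mathrm{scr}}^2+\sigma^2)}{p_{\mathrm{scr}}^2}\le\frac{64L^2\eta^2(\tau_{\mathrm{raw}}^2+\sigma^2)}{p_{\mathrm{scr}}^2}.
\]

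\textbf{Step 2 (denominator).} We need $p_{\mathrm{scr}}\ge p_{\mathrm{raw}}-\epsilon>0$. Then $1/p_{\mathrm{scr}}^2\le 1/(p_{\mathrm{raw}}-\epsilon)^2$, and since the numerator $64L^2\eta^2(\tau_{\mathrm{raw}}^2+\sigma^2)$ is nonnegative, the inequality survives multiplication. Chaining Steps 1 and 2 gives the claimed bound.

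\textbf{Main obstacle.} The only delicate point is the sign of $p_{\mathrm{raw}}-\epsilon$. If $\epsilon\ge p_{\mathrm{raw}}$, the right-hand side is undefined or the squaring step reverses. I would handle this by adding the implicit requirement $\epsilon<p_{\mathrm{raw}}$, which is needed for the right-hand side to be meaningful, and then invoke $0<p_{\mathrm{raw}}-\epsilon\le p_{\mathrm{scr}}$.

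There is also a scope issue. The left-hand side is the term of Theorem~\ref{thm:main} applied to the component $\mathcal C$ under Assumptions~1--5 with constants $(p_{\mathrm{scr}},\tau_{\mathrm{scr}}^2)$. The right-hand side is only a comparison quantity. It bounds the raw-network term, $64L^2\eta^2(\tau_{\mathrm{raw}}^2+\sigma^2)/p_{\mathrm{raw}}^2$, only when $\epsilon=0$; for $\epsilon>0$ it is a relaxation of it. The stepsize condition $\eta\le p/(8L)$ should be read with respect to $p_{\mathrm{scr}}$, since that is the regime in which Theorem~\ref{thm:main} is actually invoked. I would state explicitly that the comparison is purely algebraic at a common $\eta$. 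I would also remark that the objectives $f$ and $f_{\mathrm{raw}}$ differ, so the conclusion concerns the heterogeneity-and-topology term and not the full bound.
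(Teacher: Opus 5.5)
Your argument is correct and matches the paper's proof, which likewise just notes that the heterogeneity-and-topology term is increasing in $\tau^2$ and decreasing in $p$ and substitutes the two hypotheses. Your added requirement $\epsilon < p_{\mathrm{raw}}$ is a genuine hypothesis the paper leaves implicit: if, say, $\epsilon > p_{\mathrm{raw}} + p_{\mathrm{scr}}$ with $\tau_{\mathrm{scr}}^2 = \tau_{\mathrm{raw}}^2$ and $\sigma^2 > 0$, then $p_{\mathrm{scr}} \ge p_{\mathrm{raw}} - \epsilon$ holds trivially but $(p_{\mathrm{raw}}-\epsilon)^2 > p_{\mathrm{scr}}^2$, so the stated inequality fails.
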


\subsection{Perturbation under imperfect screening}

We close the analysis with the bound that quantifies the cost of imperfect screening. The proposition makes precise the two terms that appear when (i) the screened matrix differs from the ideal honest mixing ($\Delta_{\mathrm{scr}} > 0$), and (ii) Byzantine peers retain residual aggregation weight ($\delta_{\max} > 0$). Three additional assumptions are needed: a bounded second moment of the stochastic gradient (which strengthens Assumption~2 and controls the local drift used in the perturbation term), a bound on the screening error and surviving Byzantine influence, and a bound on the norm of Byzantine messages. Without the last, Byzantine influence cannot be bounded in general because adversarial peers can broadcast arbitrarily large updates.

\paragraph{Assumption 6: Bounded gradient second moments.}
There exists $G \ge 0$ such that for all $i \in \mathcal V$, $t \ge 0$, and points $x$ visited by the iterates,
$$
\mathbb{E}\| g_i(x, \xi_i) \|^2 \le G^2.
$$
This is stronger than Assumption~2 (bounded \emph{variance}); note $G^2 \ge \sigma^2 + \sup_x \|\nabla f(x)\|^2$. In practice, uniform Lipschitz regularization or gradient clipping enforces it. Assumption~6 is invoked in Proposition~\ref{prop:perturbation} below (through $\eta G$ in the drift magnitude $\rho$) and in the extension to $E > 1$ local steps (Theorem~2, appendix).

\paragraph{Assumption 7: Bounded screening error and surviving Byzantine influence.}
Let $\widetilde W^t$ denote the ideal honest mixing matrix induced by the true feasible graph, and let $W^t$ denote the actual DMTT screened matrix. Assume
$$
\mathbb E\|W^t-\widetilde W^t\|_F^2 \le \Delta_{\mathrm{scr}}^2.
$$
Also define
$$
\delta_{\max}
=
\sup_t \max_{i\in\mathcal H}\sum_{j\in\mathcal B} w_{ij}^t,
$$
and assume $\delta_{\max}$ is uniformly bounded and sufficiently small relative to the connectivity of the screened honest graph.

\paragraph{Assumption 8: Bounded Byzantine message norms.}
There exists $\Theta_B \ge 0$ such that, for every Byzantine client $j \in \mathcal B$ and every epoch $t$,
$$
\big\| \theta_j^{B, t+\tfrac12} \big\| \le \Theta_B
$$
almost surely, where $\theta_j^{B, t+\tfrac12}$ denotes the (possibly adversarial) value broadcast by Byzantine peer $j$ in round $t$. This assumption is invoked only in Proposition~\ref{prop:perturbation}. It is \emph{not} a consequence of Assumptions~1--7. The norm-sanity gate of Section~\ref{sec:collab_score} provides a concrete enforcement: any peer $j$ that survives to aggregation satisfies $\|\theta_j\| \le \kappa_{\mathrm{norm}}\|\theta_i\|$. Under Assumption~9 ($\|\theta_i\|\le R$ for all $i,t$), every surviving Byzantine peer satisfies $\|\theta_j\|\le \kappa_{\mathrm{norm}} R$, giving $\Theta_B = \kappa_{\mathrm{norm}} R$ as a hard bound for peers not dropped by the gate.

\paragraph{Assumption 9: Bounded domain.}
The local iterates are confined to a bounded domain such that for all $i \in \mathcal{C}$ and $t \ge 0$, $\|\theta_i^t\| \le R$. In practice, this is enforced via projected SGD or $\ell_2$ weight decay.

\begin{proposition}[Imperfect screening]
\label{prop:perturbation}
Suppose Assumptions~1--9 hold over the screened component $\mathcal{C}$ of size $N$. Let $\widetilde W^t$ be the ideal doubly stochastic matrix over $\mathcal{C}$, and let $W^t$ be the actual DMTT submatrix. Assume $\mathbb E\|W^t-\widetilde W^t\|_F^2 \le \Delta_{\mathrm{scr}}^2$ and let $\delta_{\max}$ bound the total weight any honest node assigns to Byzantine peers. Define the worst-case per-step drift magnitude:
$$
\rho := \Delta_{\mathrm{scr}} (R + \eta G) + \delta_{\max} \Theta_B.
$$
If $\eta \le \tfrac{p}{24L}$ and $E = 1$, then for any horizon $T \ge 1$:
\begin{equation}
\label{eq:perturbed_bound}
\begin{aligned}
\frac{1}{T} \sum_{t=0}^{T-1} \mathbb{E}\big\| \nabla f(\bar\theta^t) \big\|^2 \le\;&
\frac{4 (f^0 - f^\star)}{\eta T} + \frac{4 L \eta \sigma^2}{N} \\
&+ \frac{256 L^2 \eta^2 (\tau_{\mathrm{scr}}^2 + \sigma^2/4)}{p^2} \\
&+ \frac{48 L^2}{p^2} \rho^2 + \Big(\frac{8}{\eta^2} + \frac{4L}{\eta}\Big)\rho^2.
\end{aligned}
\end{equation}
\end{proposition}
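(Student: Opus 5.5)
The plan is to write the actual DMTT update as the ideal perfect-screening update plus a per-step perturbation, and then repeat the proof of Theorem~\ref{thm:main} while carrying that perturbation through both the descent inequality and the consensus recursion. Stack the honest iterates of component $\mathcal C$ into $X^t\in\mathbb R^{N\times d}$ and the stochastic gradients into $G^t$, so that $X^{t+\frac12}=X^t-\eta G^t$. The realised aggregation decomposes as
\[
X^{t+1}=\widetilde W^t X^{t+\frac12}+\underbrace{(W^t_{\mathcal H}-\widetilde W^t)X^{t+\frac12}+W^t_{\mathcal B}\Theta^{B,t+\frac12}}_{=:E^t}.
\]
Here $W^t_{\mathcal H}$ is the honest-to-honest submatrix. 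I would also absorb the mass deficit caused by Byzantine weight, namely $\widetilde W^t - W^t_{\mathcal H}$, into the first difference.

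The first step is to bound the row norms of $E^t$. By Assumption~9 and Assumption~6 we have $\|\theta_i^{t+\frac12}\|\le R+\eta G$ (in the $L^2$ sense). Cauchy--Schwarz with Assumption~7 then controls the screening part by $\Delta_{\mathrm{scr}}(R+\eta G)$. Assumption~8 together with the definition of $\delta_{\max}$ controls the Byzantine part by $\delta_{\max}\Theta_B$. This gives $\|e_i^t\|\le\rho$ and hence $\|\bar e^t\|\le\rho$ for the row average, with $\mathbb E\|E^t\|_F^2\lesssim N\rho^2$, up to a constant and cross-term bookkeeping.

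Since $\widetilde W^t$ is doubly stochastic, the average evolves as $\bar\theta^{t+1}=\bar\theta^t-\eta\bar g^t+\bar e^t$. Applying $L$-smoothness of $f$ gives
\[
f(\bar\theta^{t+1})\le f(\bar\theta^t)-\eta\langle\nabla f,\bar g^t\rangle+\langle\nabla f(\bar\theta^t),\bar e^t\rangle+L\eta^2\|\bar g^t\|^2+L\|\bar e^t\|^2.
\]
I would handle the adversarial, non-zero-mean cross term by Young's inequality,
\[
\langle\nabla f,\bar e^t\rangle\le\tfrac{\eta}{8}\|\nabla f\|^2+\tfrac{2}{\eta}\rho^2 .
\]
Rearranging, dividing by $\eta T/4$ and telescoping should produce the terms $\frac{8}{\eta^2}\rho^2+\frac{4L}{\eta}\rho^2$. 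The usual variance term picks up a factor of 2 from the splitting, which gives $4L\eta\sigma^2/N$. It also leaves a consensus error $\frac{L^2}{N}\mathbb E\|X^t-\mathbf 1\bar\theta^{t\top}\|_F^2$, as in the descent lemma of the appendix.

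For the consensus recursion, set $\Xi^t=(I-J)X^t$, so that $\Xi^{t+1}=(\widetilde W^t-J)(\Xi^t-\eta(I-J)G^t)+(I-J)E^t$. Applying Assumption~4 and the inequality $\|a+b\|^2\le(1+\frac p2)\|a\|^2+(1+\frac2p)\|b\|^2$ yields
\[
\mathbb E\|\Xi^{t+1}\|^2\le(1-\tfrac p2)\mathbb E\|\Xi^t\|^2+\tfrac{c}{p}\eta^2\mathbb E\|(I-J)G^t\|^2+\tfrac{3}{p}N\rho^2 .
\]
The gradient-disagreement term is then bounded by $3\big(L^2\|\Xi^t\|^2+N\tau_{\mathrm{scr}}^2+N\sigma^2\big)$. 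The stepsize condition $\eta\le p/(24L)$, which is stricter than the $p/(8L)$ of Theorem~\ref{thm:main} because the extra $\rho$ splitting inflates the constants, ensures that the $L^2\eta^2/p$ self-term is absorbed into the contraction. Unrolling from consensus initialization then gives
\[
\frac1{NT}\sum_t\mathbb E\|\Xi^t\|^2\lesssim\frac{\eta^2(\tau_{\mathrm{scr}}^2+\sigma^2)}{p^2}+\frac{\rho^2}{p^2}.
\]
Multiplying by $L^2$ and the factor 4 yields the $256L^2\eta^2(\cdot)/p^2$ and $48L^2\rho^2/p^2$ terms.

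I expect the main obstacle to be the cross term and the fact that $\mathbb E\|W^t-\widetilde W^t\|_F^2$ is only an \emph{expected} bound that is correlated with the iterates. The clean way around this is to use the almost-sure boundedness $\|\theta\|\le R+\eta G$ (Assumption~9, together with clipping) so that the matrix error factors out. Matching the exact constants 256 and 48 will mostly be a matter of choosing the Young parameters to mirror the appendix lemmas.
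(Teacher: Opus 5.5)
Your proof has the same architecture as the paper's. It uses the decomposition $\Theta^{t+1}=\widetilde W^t(\Theta^t-\eta G^t)+E^t$, the $\tfrac{\eta}{8}/\tfrac{2}{\eta}$ Young split of $\langle\nabla f(\bar\theta^t),e^t\rangle$ that produces $(\tfrac{8}{\eta^2}+\tfrac{4L}{\eta})\rho^2$, and a perturbed consensus recursion. That recursion is summed from $U^0=0$ and substituted into the telescoped descent inequality. Consensus initialization is not stated in the proposition, but the paper's proof also uses it.

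The one substantive difference is the bound on $V^t=(I-J)G^t$. The paper reuses Step~2 of Theorem~1, $\mathbb E\|V^t\|_F^2\le 2L^2\mathbb E\|U^t\|_F^2+4N\tau_{\mathrm{scr}}^2+N\sigma^2+4N\mathbb E\|\nabla f(\bar\theta^t)\|^2$. As a result, $\tfrac{64\eta^3L^2}{p^2}\sum_t\mathbb E\|\nabla f(\bar\theta^t)\|^2$ reappears on the right and must be absorbed. That absorption is exactly where $\eta\le p/(24L)$ comes from, and why only $\tfrac{\eta}{4}$ remains on the left.

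Your bound $3(L^2\|U^t\|_F^2+N\tau_{\mathrm{scr}}^2+N\sigma^2)$ (with $U^t$ your $\Xi^t$) amounts to centering at $\nabla f(\bar\theta^t)$. This is legitimate because $I-J$ kills any common row, and it carries no gradient-norm term. Three things follow:
\begin{itemize}
\item consensus and descent decouple;
\item the stepsize condition is needed only to keep the recursion contracting, and $p/(24L)$ is more than enough for that;
\item you may keep the full $\tfrac{3\eta}{8}$ on the left.
\end{itemize}
Your constants will not literally be $256$ and $48$; they do not arise by ``multiplying by $L^2$ and the factor $4$''. But with the $1-\tfrac p2$ recursion described below and $\tfrac{3\eta}{8}$ kept, they come in under all the stated ones, so the proposition follows.

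Three smaller corrections.

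First, Assumption~7 controls $\|W^t-\widetilde W^t\|_F$ only as a whole and in mean square, so the row-wise claim $\|e_i^t\|\le\rho$ does not follow. What you actually use does hold: $\mathbb E\|E^t\|_F^2\le N\rho^2$, and hence $\mathbb E\|\bar e^t\|^2\le\rho^2$ by Jensen.

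Second, a single Young split with first piece $(\widetilde W^t-J)(U^t-\eta V^t)$ leaves $U^t$ and $V^t$ coupled. Either take the first piece to be $(\widetilde W^t-J)U^t$ alone, or split twice and accept $1-\tfrac p4$ as the paper does. The first option keeps the factor $1-\tfrac p2$ at the price of $\tfrac6p$ in front of both $\eta^2\mathbb E\|V^t\|_F^2$ and $N\rho^2$.

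Third, your remark that the matrix error must be factored out almost surely is well taken, and the paper's Step~1 glosses over it. The paper shows only $\mathbb E\|E^t\|_F\le\sqrt N\rho$. It treats the correlated factors $\|W^t-\widetilde W^t\|_F$ and $\|G^t\|_F$ as if their expectations multiplied, and then uses $\mathbb E\|E^t\|_F^2\le N\rho^2$. If $\|\theta_i^{t+\frac12}\|\le R+\eta G$ almost surely, then $\|(W^t-\widetilde W^t)X\|_F\le\|W^t-\widetilde W^t\|_F\|X\|_F$ together with $\mathbb E\|W^t-\widetilde W^t\|_F\le\Delta_{\mathrm{scr}}$ gives $\mathbb E\|E^t\|_F^2\le N\rho^2$ exactly. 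This requires clipping or projection of the half-step, i.e.\ more than the second-moment bound of Assumption~6.
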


\begin{remark}[Interpretation]
The first three terms of~\eqref{eq:perturbed_bound} match the Byzantine-free rate of Theorem~\ref{thm:main} up to constants; the final term, driven by $\rho^2$, isolates the topology-screening error $\Delta_{\mathrm{scr}}$ and the surviving Byzantine influence $\delta_{\max}$. Because the Byzantine message induces an additive shift in \emph{parameter} space rather than \emph{gradient} space, the penalty scales as $\mathcal{O}(\rho^2/\eta^2)$: unlike standard SGD, shrinking $\eta$ does not drive the error to zero — an arbitrarily small stepsize would let the unscaled Byzantine parameter drift dominate the gradient signal, so a fixed stepsize floor is essential in robust decentralized learning.
\end{remark}

\section{Experimental Evaluation}
\label{sec:experiments}

\subsection{Conditions and Baselines}

The controlled comparison isolates the contributions of dynamic topology and of the DMTT trust protocol through three primary conditions, all subjected to the same combined topology-liar + Gaussian attack:

\begin{enumerate}[leftmargin=2em, label=\textbf{C\arabic*.}]
  \item \textbf{FedAvg (static).} Fixed fully-connected topology; plain FedAvg over all neighbours; no trust. This is the threat faced by unmodified decentralized averaging when the topology is wrongly assumed static.
  \item \textbf{FedAvg (dynamic).} Dynamic topology $\mathcal{G}^t$ from the mobility model; each node averages over all direct neighbours $\mathcal{N}_i^{\mathrm{dir},t}$ with no trust filtering. Comparing C1 and C2 isolates the effect of mobility alone.
  \item \textbf{DMTT.} Same mobility model as C2 with the full DMTT protocol active: link-reliability EMA, Beta source-trust with \texttt{TOPO\_CLAIM} verification, and the trust-screened, $q_{ij}$-weighted aggregation of Section~\ref{sec:collab_score}. Comparing C2 and C3 isolates the benefit of trust.
\end{enumerate}

To position DMTT against the state of the art in Byzantine-resilient decentralized aggregation, we additionally evaluate three robust aggregators over the same dynamic graph and attack: Krum (multi-Krum distance selection)~\cite{blanchard2017machine}, BALANCE (adaptive distance filtering)~\cite{fang2024byzantine}, and UBAR (two-stage distance-plus-loss filtering)~\cite{guo2021byzantine}.
We first validated the distributed ZeroMQ backend against the simulation backend at $10$ nodes across three Byzantine fractions and two seeds each, running DMTT and FedAvg (dynamic) on a real Melbourne Research Cloud (MRC) testbed with wall-clock synchronisation and IPC transport. The two backends produce qualitatively identical convergence trajectories across all tested conditions: DMTT converges to high accuracy in both cases, and FedAvg collapses to chance under attack in both cases. The mean final honest-node accuracy (last 5 of 20 rounds) agrees within $0.07$ across all condition and Byzantine fraction pairs (convergence curves in the appendix). Having established that the simulation faithfully reproduces real distributed behaviour, we ran all main experiments at $100$ nodes on the simulation backend, which executes the protocol logic in a single process, making the full sweep across $8$ Byzantine fractions and $3$ seeds per condition deterministic and tractable.

\subsection{Datasets and Models}

\paragraph{UCI HAR.}
We first evaluate on \textbf{UCI HAR}~\cite{anguita2013public}, the smartphone human-activity-recognition benchmark used in the original \texttt{MURMURA} study: $561$ hand-crafted inertial features per window and six activity classes (walking, walking upstairs, walking downstairs, sitting, standing, lying). The $7{,}352$ training instances are partitioned across $N=100$ mobile clients with a Dirichlet distribution of concentration $\alpha = 0.5$, producing realistic non-IID local distributions; each client holds out $20\%$ of its partition as a local test set on which all reported accuracies are computed. Every client trains a two-hidden-layer perceptron ($256$--$128$ units, dropout $0.3$) with a Dirichlet (evidential) output head, optimised with the evidential loss of \cite{rangwala2025evidential, sensoy2018evidential}. Chance accuracy on the six balanced classes is $\tfrac{1}{6}\approx 0.167$.

\paragraph{PAMAP2.}
To confirm generality we additionally evaluate on \textbf{PAMAP2}~\cite{reiss2012introducing}, a body-worn sensor benchmark with richer temporal structure and finer-grained activity classes: nine subjects performed twelve physical activities (lying, sitting, standing, walking, running, cycling, Nordic walking, ascending/descending stairs, vacuuming, ironing, rope jumping) while wearing inertial measurement units on hand, chest, and ankle ($100\,\mathrm{Hz}$, $13$ features per IMU after excluding invalid orientation columns) together with a wrist heart-rate monitor. Raw sensor streams are segmented into $50$-sample ($0.5$-second) windows with $25$-sample stride; each window is flattened to a $2{,}000$-dimensional feature vector ($50 \times 40$). This representation captures within-activity dynamics while remaining computationally comparable to UCI HAR's pre-engineered $561$-feature windows. The roughly $38{,}800$ windows are partitioned across the same $N=100$ clients using a Dirichlet $\alpha=0.5$ split; each client trains an evidential perceptron ($256$--$128$ units, dropout $0.3$) of the same depth as the UCI HAR model. Because PAMAP2 has $12$ balanced classes, chance accuracy is $\tfrac{1}{12}\approx 0.083$, and any method above $0.50$ is meaningfully learning. The model-compatibility score $s_{ij}^{\mathrm{model}}$ is computed identically on both datasets: local validation accuracy combined with epistemic vacuity uncertainty.

\begin{table}[t]
\centering
\small
\caption{Experimental hyperparameters. Primary metric: mean honest-node accuracy on each client's held-out local test partition, averaged over the final 10 of 50 rounds.}
\label{tab:exp_setup}
\begin{tabular}{@{}p{0.50\columnwidth}p{0.42\columnwidth}@{}}
\toprule
\textbf{Parameter} & \textbf{Value} \\
\midrule
Number of clients $N$           & 100 \\
Datasets                        & UCI HAR, PAMAP2 \\
Heterogeneity                   & Dirichlet $\alpha = 0.5$ \\
Byzantine fraction              & $10$--$80\%$ (10--80 nodes) \\
Attack type                     & Topology liar + Gaussian ($\sigma = 10$) \\
Training rounds                 & 50 \\
Local epochs $E$                & 2 \\
Batch size / learning rate      & $32$ / $0.01$ \\
Seeds                           & $\{42,43,44\}$ \\
\midrule
\multicolumn{2}{@{}l}{\emph{Mobility model (dynamic conditions)}} \\
Arena $A$                       & $100 \times 100$ \\
Communication range $r_c$       & 40 \\
Max speed $v_{\max}$            & 8 per round \\
\midrule
\multicolumn{2}{@{}l}{\emph{DMTT protocol (C3)}} \\
Budget $B$                      & 5 \\
EMA rate $\rho$                 & 0.1 \\
Forgetting factor $\lambda$     & 0.9 \\
Trust weights $(w_d, w_x)$      & $(1.0,\;1.0)$ \\
Uncertainty threshold / penalty $(\tau_U,\eta)$ & $(0.3,\;5.0)$ \\
Score weights $(\lambda_1,\lambda_2,\lambda_3,\lambda_4)$ & $(0.4,\;0.3,\;0.2,\;0.1)$ \\
Model-compat. gate $\gamma$     & $0.6$ \\
Topology-trust gate $\tau_{\mathrm{trust}}$ & $0.25$ \\
Norm-sanity ratio               & $5\times$ \\
Self-weight $\omega$            & $0.5$ \\
\bottomrule
\end{tabular}
\end{table}

\subsection{Results}

\begin{figure*}[ht]
\centering
\includegraphics[width=\textwidth]{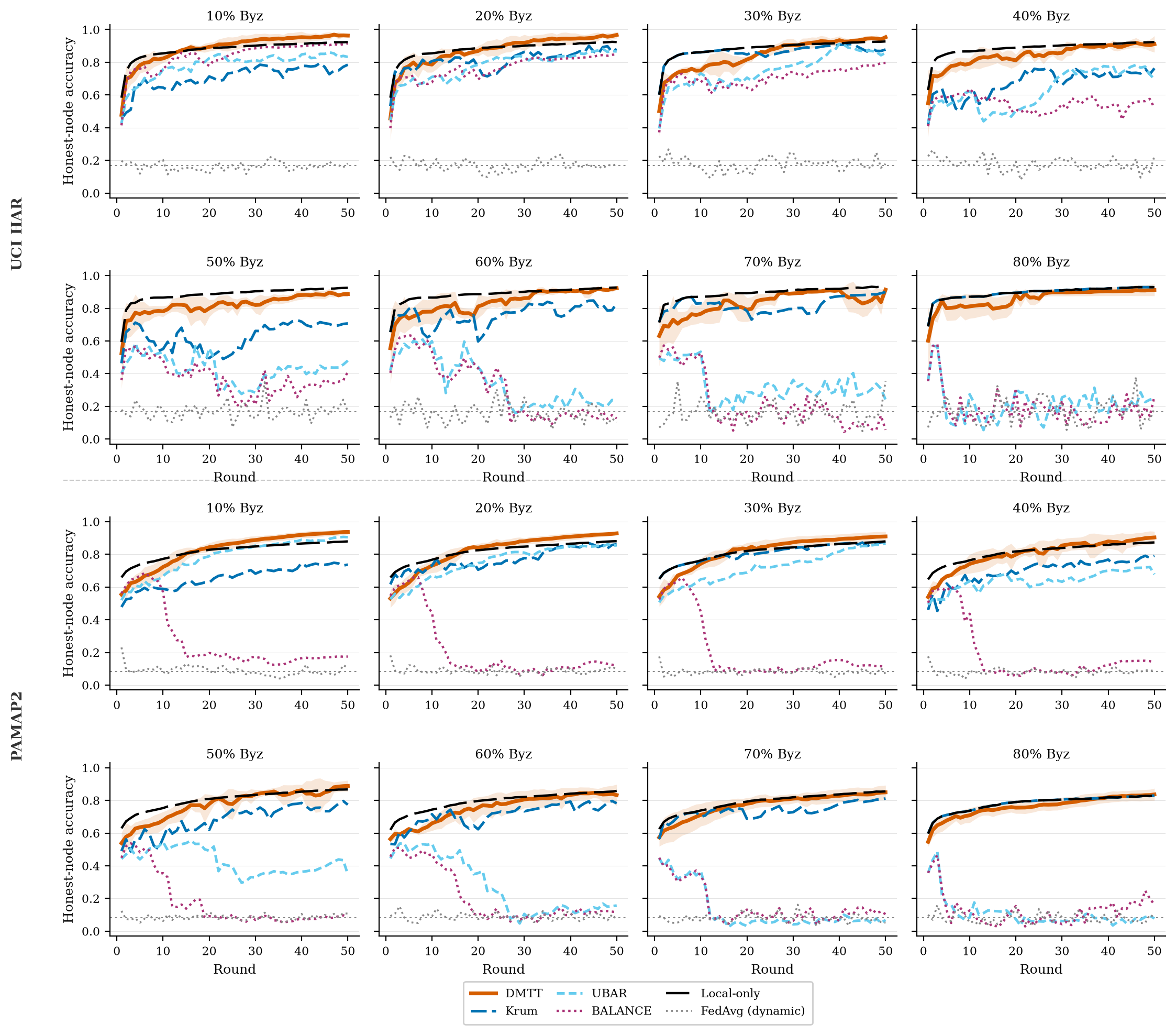}
\caption{Honest-node accuracy versus round under the combined topology-liar + Gaussian ($\sigma{=}10$) attack (mean $\pm$ std over three seeds). \textbf{Top two rows}: UCI HAR ($6$ classes, chance $=0.167$). \textbf{Bottom two rows}: PAMAP2 ($12$ classes, chance $=0.083$). FedAvg collapses to chance at all fractions on both datasets. Krum and BALANCE each fail to consistently clear the local-only bar. UBAR is competitive at low Byzantine fractions but collapses above $50\%$. DMTT sustains high honest-node accuracy across all $10$--$80\%$ fractions on both datasets, remaining at or above local-only throughout.}
\label{fig:dmtt_curves_combined}
\end{figure*}

\subsubsection{UCI HAR}

\paragraph{Trust-aware screening is decisive.}
Figure~\ref{fig:dmtt_curves_combined} (top two rows) shows final honest-node accuracy for all methods across Byzantine fractions from $10\%$ to $80\%$ (mean $\pm$ std over three seeds, averaged over the last $10$ of $50$ rounds). Both non-robust baselines collapse to chance at all fractions: \textbf{FedAvg (static)} and \textbf{FedAvg (dynamic)} never exceed $0.22$, confirming that even a single $\sigma=10$ Gaussian-poisoned model averaged into the aggregate destroys learning, regardless of graph type. \textbf{DMTT} instead sustains honest accuracy above $0.862$ at $10\%$--$80\%$ Byzantine, approximately a $5\times$ improvement over non-robust baselines at all fractions.

\paragraph{Graceful degradation at extreme Byzantine fractions.}
At low-to-moderate fractions ($10$--$30\%$ Byzantine), DMTT significantly outperforms \textbf{Local-only} ($0.920$--$0.923$), demonstrating that DMTT extracts genuine collaborative benefit from honest neighbours while screening adversaries. As the Byzantine fraction rises above $40\%$, the number of reachable honest neighbours per node in the dynamic graph diminishes, so the screened aggregate converges toward the local model and the collaboration surplus over local-only narrows. Accuracy is non-monotone in the Byzantine fraction: DMTT dips to $0.862 \pm 0.061$ at $70\%$ before recovering to $0.908 \pm 0.031$ at $80\%$. This reflects the fallback-to-local-only mechanism: at $80\%$ Byzantine, trust gates exclude nearly all adversarial neighbours and the effective aggregate is almost entirely the local model; since local-only accuracy ($0.927$) is high, performance recovers relative to $70\%$. At $70\%$ the honest-neighbour pool is still large enough that the system attempts more collaboration, but the high adversarial density makes clean peer selection harder, yielding a lower mean with higher variance ($\sigma{=}0.061$ vs.\ $0.031$ at $80\%$). At $80\%$ Byzantine ($80$ of $100$ nodes adversarial), DMTT ($0.908 \pm 0.031$) is within one standard deviation of Local-only ($0.927 \pm 0.035$) and clearly superior to BALANCE ($0.191$) and UBAR ($0.225$), confirming the \emph{near-safety} property: even when honest nodes are overwhelmed by adversaries, trust-screened DMTT degrades to approximately local-only performance rather than collapsing to chance.

\paragraph{Dedicated robust aggregators fail to clear the local-only bar.}
By the local-only bar, all three dedicated robust baselines fail. \textbf{Krum} is competitive at $10$--$30\%$ Byzantine ($0.768$--$0.887$) but degrades at $40$--$60\%$ ($0.734$--$0.819$), below both DMTT ($0.880$--$0.907$) and local-only ($0.919$--$0.922$), because topology-liar poisoning causes it to select compromised models that pass the distance filter. At $70\%$ Byzantine, Krum's performance is variable across seeds (ranging from $0.805$ to $0.942$, sometimes exceeding local-only and sometimes falling below it). At $80\%$ Byzantine, Krum's output matches local-only \emph{exactly on every seed} (mean accuracy difference: $0.000$). With $80$ of $100$ nodes adversarial, Multi-Krum's nearest-neighbour count collapses for any realistic neighbourhood size, so Krum selects only the single closest model. Under $\sigma=10$ Gaussian noise the honest models cluster tightly and the selected peer is invariably an honest one, but aggregating with a single peer under Dirichlet $\alpha=0.5$ non-IID data provides negligible collaborative gain, yielding accuracy statistically indistinguishable from local-only. This is the same graceful degradation toward local-only that DMTT exhibits at extreme fractions, reached here through a different mechanism. \textbf{BALANCE} degrades catastrophically ($0.537$ at $40\%$, $0.147$ at $60\%$, $0.078$ at $70\%$), driven by the brittleness of adaptive distance thresholds under simultaneous topology manipulation and non-IID data. \textbf{UBAR} degrades more gradually but still falls below local-only above $50\%$ Byzantine. DMTT is the only method that consistently clears the local-only bar at all fractions from $10\%$ to $80\%$.

\paragraph{Byzantine influence is screened to zero.} All poisoned peers are excluded by the three aggregation gates: $\delta_{\max}=0$ in every round, even at Byzantine fractions up to $80\%$. Under $30\%$ Byzantine participation, honest-peer trust rises to $T_{ij}^{\mathrm{topo}}\approx0.986$ by rounds $40$--$50$, whereas topology liars fall to $0.012$, producing a $0.974$ separation via the oracle-free self-inclusion check.

\subsubsection{PAMAP2}

We next evaluate on PAMAP2 to test whether DMTT's trust mechanisms generalize to richer sensor data with finer-grained activity classes and raw temporal windows, a setting that stresses distance-based aggregators more severely than UCI HAR's pre-engineered features.

\paragraph{Generalisation to richer sensor data.}
Figure~\ref{fig:dmtt_curves_combined} (bottom two rows) shows the same comparison on PAMAP2 ($12$ classes, chance $= 0.083$). Both FedAvg variants collapse to near-chance at all Byzantine fractions ($0.064$--$0.095$, indistinguishable from chance). \textbf{DMTT} sustains honest accuracy above $0.829$ across all Byzantine concentrations, a $9\times$--$13\times$ lift over the non-robust baselines. Crucially, DMTT clears the local-only bar ($0.873$--$0.823$) at \emph{every} Byzantine fraction from $10\%$ to $80\%$; because PAMAP2's twelve-class task is harder, the collaboration benefit persists even at extreme fractions. \textbf{BALANCE} collapses entirely on PAMAP2 (reaching only $0.124$--$0.170$ across all fractions, near-chance on a 12-class task), confirming that its distance thresholds cannot tolerate simultaneous topology manipulation and raw-window feature heterogeneity. \textbf{UBAR} starts competitively ($0.894$ at $10\%$) but collapses at $50\%$ Byzantine (mean $0.395 \pm 0.350$, highly variable) and reaches chance at $70\%$--$80\%$. The Beta trust separation replicates exactly: honest peers reach $T_{ij}^{\mathrm{topo}} \approx 0.997$ while liars are suppressed to $T_{ij}^{\mathrm{topo}} \approx 0.022$ (identical to UCI HAR), confirming that the trust layer is insensitive to the choice of feature representation.

\paragraph{Convergence.}
Figure~\ref{fig:dmtt_curves_combined} (bottom two rows) shows PAMAP2 learning trajectories across all eight Byzantine fractions. \textbf{BALANCE} fails near-chance throughout at every fraction, confirming that its adaptive distance thresholds cannot tolerate simultaneous topology manipulation and raw-window feature heterogeneity. \textbf{UBAR} converges well at low Byzantine fractions but collapses above $50\%$, exhibiting high seed-to-seed variance at that threshold. \textbf{DMTT} maintains clear separation from all baselines at every fraction, remaining at or above the local-only reference throughout. The same qualitative pattern holds as on UCI HAR, but with a sustained collaboration surplus even at extreme fractions due to PAMAP2's twelve-class task difficulty.

\subsection{Hyperparameter Sensitivity}
\label{sec:ablation}

A practical concern for any trust-based protocol is whether performance depends on careful parameter tuning.
Table~\ref{tab:sensitivity} addresses this by varying one DMTT hyperparameter at a time while holding all others at the default, across four Byzantine fractions on both datasets.

The four parameters tested cover the trust dynamics ($\lambda$, the Beta forgetting factor), the link-reliability estimator ($\rho$, the EMA smoothing rate), the topology-trust gate threshold ($\tau_{\mathrm{trust}}$), and the collaborator budget ($B$).
Across all eight variants, honest-node accuracy stays within $\pm 0.005$ of the default at every Byzantine fraction.
The forgetting factor $\lambda$ can be raised from $0.70$ to $0.95$ with no measurable effect, because topology-liar contradictions accumulate rapidly at every decay rate: even with $\lambda=0.70$, a liar that falsifies two neighbours per round reaches a suppressed trust score within a handful of rounds.
The EMA rate $\rho$ is similarly irrelevant, as direct link quality is perfectly observable under the shared mobility model and does not require smoothing on any particular timescale.
The trust gate threshold $\tau_{\mathrm{trust}}$ can be relaxed to $0.10$ or tightened to $0.50$ without loss: Byzantine nodes are driven to $T_{ij}^{\mathrm{topo}} \lesssim 0.02$ after trust convergence, well below even the loose threshold.
Reducing the budget to $B=3$ produces a small drop at $10\%$ Byzantine ($0.956$ vs.\ $0.960$) because the node has fewer honest collaborators available, but the difference closes entirely at higher Byzantine fractions where the feasible collaborator pool is already constrained.

\begin{table}[t]
\centering
\small
\caption{DMTT hyperparameter sensitivity (mean $\pm$ std over 3 seeds, last 10 of 50 rounds, $\alpha{=}0.5$, combined topology-liar + Gaussian attack). One parameter is varied at a time; all others remain at the default.}
\label{tab:sensitivity}
\resizebox{\columnwidth}{!}{%
{\setlength{\tabcolsep}{3pt}%
\begin{tabular}{@{}lcccc@{}}
\toprule
Configuration & 10\% Byz & 30\% Byz & 50\% Byz & 80\% Byz \\
\midrule
\textbf{Default} ($\lambda{=}0.90,\;\rho{=}0.10,\;\tau{=}0.25,\;B{=}5$) & $0.960 \pm 0.014$ & $0.936 \pm 0.015$ & $0.880 \pm 0.018$ & $0.907 \pm 0.025$ \\
\midrule
$\lambda=0.70$ (faster trust decay) & $0.960 \pm 0.015$ & $0.940 \pm 0.012$ & $0.887 \pm 0.012$ & $0.908 \pm 0.025$ \\
$\lambda=0.95$ (slower trust decay) & $0.960 \pm 0.015$ & $0.940 \pm 0.012$ & $0.886 \pm 0.012$ & $0.908 \pm 0.025$ \\
\midrule
$\rho=0.05$ (slower link EMA) & $0.959 \pm 0.015$ & $0.940 \pm 0.012$ & $0.885 \pm 0.010$ & $0.908 \pm 0.025$ \\
$\rho=0.30$ (faster link EMA) & $0.960 \pm 0.015$ & $0.940 \pm 0.012$ & $0.885 \pm 0.011$ & $0.908 \pm 0.025$ \\
\midrule
$\tau_{\mathrm{trust}}=0.10$ (loose gate) & $0.960 \pm 0.015$ & $0.940 \pm 0.012$ & $0.886 \pm 0.012$ & $0.908 \pm 0.025$ \\
$\tau_{\mathrm{trust}}=0.50$ (strict gate) & $0.960 \pm 0.015$ & $0.940 \pm 0.012$ & $0.886 \pm 0.012$ & $0.908 \pm 0.025$ \\
\midrule
$B=3$ (fewer collaborators) & $0.956 \pm 0.018$ & $0.936 \pm 0.011$ & $0.886 \pm 0.012$ & $0.908 \pm 0.025$ \\
$B=7$ (more collaborators) & $0.959 \pm 0.016$ & $0.939 \pm 0.011$ & $0.886 \pm 0.012$ & $0.908 \pm 0.025$ \\
\bottomrule
\end{tabular}}}
\end{table}

\section{Conclusions and Future Work}
\label{sec:conclusion}

We presented DMTT, a protocol that extends \texttt{MURMURA} to dynamic network topologies under adversarial topology-manipulation attacks. The theoretical analysis shows that DMTT confines Byzantine influence to a bounded residual $\delta_{\max}$ in the mixing matrices, which shrinks as the Beta-evidence trust mechanism accumulates observations. \texttt{MURMURA} was also redesigned into a production-ready distributed framework with coordinator-free wall-clock synchronisation (Eq.~\eqref{eq:wall_clock}), ZeroMQ peer-to-peer transport, a deterministic mobility model (Eqs.~\eqref{eq:random_walk}--\eqref{eq:edge_condition}), and an omission topology-liar attack (Eq.~\eqref{eq:liar_claim}) detectable via an oracle-free self-inclusion check.

Empirically, on UCI HAR and PAMAP2 across $100$ mobile clients under combined topology-liar and Gaussian attacks, DMTT sustained honest-node accuracy above $0.862$ on UCI HAR and $0.829$ on PAMAP2 at every Byzantine fraction from $10\%$ to $80\%$, while static and dynamic FedAvg collapsed to chance. Dedicated robust aggregators (Krum, BALANCE, UBAR) all failed to consistently beat the local-only baseline: Krum degraded at $40$--$60\%$ Byzantine ($0.734$--$0.819$ vs.\ DMTT's $0.880$--$0.910$) as topology-liar poisoning let compromised models pass its distance filter; BALANCE collapsed near-chance on PAMAP2 even at $10\%$ Byzantine ($0.170$); and UBAR collapsed above $50\%$ Byzantine (mean $0.395\pm0.350$ at $50\%$, chance by $70$--$80\%$). DMTT was the only method to clear the local-only bar across both datasets at all tested fractions, degrading gracefully toward near-local-only performance at extreme fractions rather than collapsing, a safety guarantee no other aggregator provides. A trust diagnostic confirmed \emph{exactly zero} Byzantine aggregation weight at every fraction, consistent with the $\delta_{\max}=0$ condition of Theorem~\ref{thm:main}.

Future work includes relaxing the doubly-stochastic assumption on $W^t$ to accommodate asymmetric \texttt{MURMURA}-style weights, extending the analysis to bounded-staleness execution, and validating DMTT on real mobile deployments where neighbours are discovered via physical beacons rather than a shared mobility model, a change that would affect only the routing/discovery step, since the self-inclusion check already relies solely on transport-level delivery acknowledgements.

%\balance
\bibliographystyle{IEEEtran}
\bibliography{references}

@article{rangwala2025evidential,
  title={Evidential Trust-Aware Model Personalization in Decentralized Federated Learning for Wearable IoT},
  author={Rangwala, Murtaza and Sinnott, Richard O and Buyya, Rajkumar},
  journal={arXiv preprint arXiv:2512.19131},
  year={2025}
}

@article{ye2022decentralized,
  title={Decentralized federated learning with unreliable communications},
  author={Ye, Hao and Liang, Le and Li, Geoffrey Ye},
  journal={IEEE journal of selected topics in signal processing},
  volume={16},
  number={3},
  pages={487--500},
  year={2022},
  publisher={IEEE}
}

@article{wu2024topology,
  title={Topology learning for heterogeneous decentralized federated learning over unreliable d2d networks},
  author={Wu, Zheshun and Xu, Zenglin and Zeng, Dun and Li, Junfan and Liu, Jie},
  journal={IEEE Transactions on Vehicular Technology},
  volume={73},
  number={8},
  pages={12201--12206},
  year={2024},
  publisher={IEEE}
}

@article{nesterenko2009discovering,
  title={Discovering network topology in the presence of byzantine faults},
  author={Nesterenko, Mikhail and Tixeuil, S{\'e}bastien},
  journal={IEEE Transactions on Parallel and Distributed Systems},
  volume={20},
  number={12},
  pages={1777--1789},
  year={2009},
  publisher={IEEE}
}

@article{gaucher2024unified,
  title={Unified breakdown analysis for byzantine robust gossip},
  author={Gaucher, Renaud and Dieuleveut, Aymeric and Hendrikx, Hadrien},
  journal={arXiv preprint arXiv:2410.10418},
  year={2024}
}

@article{kharrat2024decentralized,
  title={Decentralized personalized federated learning},
  author={Kharrat, Salma and Canini, Marco and Horvath, Samuel},
  journal={arXiv preprint arXiv:2406.06520},
  year={2024}
}

@inproceedings{liu2024decentralized,
  title={Decentralized directed collaboration for personalized federated learning},
  author={Liu, Yingqi and Shi, Yifan and Li, Qinglun and Wu, Baoyuan and Wang, Xueqian and Shen, Li},
  booktitle={Proceedings of the IEEE/CVF conference on computer vision and pattern recognition},
  pages={23168--23178},
  year={2024}
}

@article{rangwala2025sketchguard,
  title={SketchGuard: Scaling Byzantine-Robust Decentralized Federated Learning via Sketch-Based Screening},
  author={Rangwala, Murtaza and Azzedin, Farag and Sinnott, Richard O and Buyya, Rajkumar},
  journal={arXiv preprint arXiv:2510.07922},
  year={2025}
}

@article{gupta2025fully,
  title={Fully-distributed construction of byzantine-resilient dynamic peer-to-peer networks},
  author={Gupta, Aayush and Pandurangan, Gopal},
  journal={arXiv preprint arXiv:2506.04368},
  year={2025}
}

@article{shi2026dystop,
  title={DySTop: Dynamic Staleness Control and Topology Construction for Asynchronous Decentralized Federated Learning},
  author={Shi, Yizhou and Ma, Qianpiao and Xu, Yan and Zhou, Junlong and Hu, Ming and Liao, Yunming and Xu, Hongli},
  journal={IEEE Transactions on Mobile Computing},
  year={2026},
  publisher={IEEE}
}

@inproceedings{koloskova2020unified,
  title={A unified theory of decentralized SGD with changing topology and local updates},
  author={Koloskova, Anastasia and Loizou, Nicolas and Boreiri, Sadra and Jaggi, Martin and Stich, Sebastian},
  booktitle={International conference on machine learning},
  pages={5381--5393},
  year={2020},
  organization={PMLR}
}

@article{nedic2014distributed,
  title={Distributed optimization over time-varying directed graphs},
  author={Nedi{\'c}, Angelia and Olshevsky, Alex},
  journal={IEEE Transactions on Automatic Control},
  volume={60},
  number={3},
  pages={601--615},
  year={2014},
  publisher={IEEE}
}

@ARTICLE{7405263,
  author={Nedić, Angelia and Olshevsky, Alex},
  journal={IEEE Transactions on Automatic Control},
  title={Stochastic Gradient-Push for Strongly Convex Functions on Time-Varying Directed Graphs},
  year={2016},
  volume={61},
  number={12},
  pages={3936-3947},
  doi={10.1109/TAC.2016.2529285}}

@inproceedings{le2023refined,
  title={Refined convergence and topology learning for decentralized sgd with heterogeneous data},
  author={Le Bars, Batiste and Bellet, Aur{\'e}lien and Tommasi, Marc and Lavoie, Erick and Kermarrec, Anne-Marie},
  booktitle={International Conference on Artificial Intelligence and Statistics},
  pages={1672--1702},
  year={2023},
  organization={PMLR}
}

@article{he2022byzantine,
  title={Byzantine-robust decentralized learning via clippedgossip},
  author={He, Lie and Karimireddy, Sai Praneeth and Jaggi, Martin},
  journal={arXiv preprint arXiv:2202.01545},
  year={2022}
}

@ARTICLE{10208131,
  author={Wu, Zhaoxian and Chen, Tianyi and Ling, Qing},
  journal={IEEE Transactions on Signal Processing},
  title={Byzantine-Resilient Decentralized Stochastic Optimization With Robust Aggregation Rules},
  year={2023},
  volume={71},
  number={},
  pages={3179-3195},
  doi={10.1109/TSP.2023.3300629}}

@inproceedings{anguita2013public,
  title={A public domain dataset for human activity recognition using smartphones},
  author={Anguita, Davide and Ghio, Alessandro and Oneto, Luca and Parra, Xavier and Reyes-Ortiz, Jorge Luis},
  booktitle={European Symposium on Artificial Neural Networks (ESANN)},
  volume={3},
  pages={3},
  year={2013}
}

@article{sensoy2018evidential,
  title={Evidential deep learning to quantify classification uncertainty},
  author={Sensoy, Murat and Kaplan, Lance and Kandemir, Melih},
  journal={Advances in Neural Information Processing Systems},
  volume={31},
  year={2018}
}

@inproceedings{blanchard2017machine,
  title={Machine learning with adversaries: Byzantine tolerant gradient descent},
  author={Blanchard, Peva and El Mhamdi, El Mahdi and Guerraoui, Rachid and Stainer, Julien},
  booktitle={Advances in Neural Information Processing Systems},
  volume={30},
  year={2017}
}

@inproceedings{fang2024byzantine,
  title={Byzantine-robust decentralized federated learning},
  author={Fang, Minghong and Zhang, Zifan and Hairi and Khanduri, Prashant and Liu, Jia and Lu, Songtao and Liu, Yuchen and Gong, Neil},
  booktitle={Proceedings of the 2024 ACM SIGSAC Conference on Computer and Communications Security},
  pages={2874--2888},
  year={2024}
}

@inproceedings{reiss2012introducing,
author = {Reiss, Attila and Stricker, Didier},
title = {Introducing a New Benchmarked Dataset for Activity Monitoring},
year = {2012},
isbn = {9780769546971},
publisher = {IEEE Computer Society},
address = {USA},
doi = {10.1109/ISWC.2012.13},
booktitle = {Proceedings of the 2012 16th Annual International Symposium on Wearable Computers (ISWC)},
pages = {108–109},
numpages = {2},
series = {ISWC '12}
}

@article{guo2021byzantine,
  title={Byzantine-resilient decentralized stochastic gradient descent},
  author={Guo, Shangwei and Zhang, Tianwei and Yu, Han and Xie, Xiaofei and Ma, Lei and Xiang, Tao and Liu, Yang},
  journal={IEEE Transactions on Circuits and Systems for Video Technology},
  volume={32},
  number={6},
  pages={4096--4106},
  year={2021}
}

@ARTICLE{9850408,
  author={Sun, Tao and Li, Dongsheng and Wang, Bao},
  journal={IEEE Transactions on Pattern Analysis and Machine Intelligence}, 
  title={Decentralized Federated Averaging}, 
  year={2023},
  volume={45},
  number={4},
  pages={4289-4301},
  doi={10.1109/TPAMI.2022.3196503}}

@ARTICLE{9713700,
  author={Liu, Wei and Chen, Li and Zhang, Wenyi},
  journal={IEEE Transactions on Signal and Information Processing over Networks}, 
  title={Decentralized Federated Learning: Balancing Communication and Computing Costs}, 
  year={2022},
  volume={8},
  number={},
  pages={131-143},
  doi={10.1109/TSIPN.2022.3151242}}

@InProceedings{pmlr-v108-zantedeschi20a,
  title = 	 {Fully Decentralized Joint Learning of Personalized Models and Collaboration Graphs},
  author =       {Zantedeschi, Valentina and Bellet, Aur\'elien and Tommasi, Marc},
  booktitle = 	 {Proceedings of the Twenty Third International Conference on Artificial Intelligence and Statistics},
  pages = 	 {864--874},
  year = 	 {2020},
  editor = 	 {Chiappa, Silvia and Calandra, Roberto},
  volume = 	 {108},
  series = 	 {Proceedings of Machine Learning Research},
  month = 	 {26--28 Aug},
  publisher =    {PMLR},
  url = 	 {https://proceedings.mlr.press/v108/zantedeschi20a.html}
}

@InProceedings{Li_2022_CVPR,
    author    = {Li, Shuangtong and Zhou, Tianyi and Tian, Xinmei and Tao, Dacheng},
    title     = {Learning To Collaborate in Decentralized Learning of Personalized Models},
    booktitle = {Proceedings of the IEEE/CVF Conference on Computer Vision and Pattern Recognition (CVPR)},
    month     = {June},
    year      = {2022},
    pages     = {9766-9775}
}

@inproceedings{ye2023personalized,
  title={Personalized federated learning with inferred collaboration graphs},
  author={Ye, Rui and Ni, Zhenyang and Wu, Fangzhao and Chen, Siheng and Wang, Yanfeng},
  booktitle={International conference on machine learning},
  pages={39801--39817},
  year={2023},
  organization={PMLR}
}

@article{lu2023privacy,
  title={Privacy-preserving decentralized federated learning over time-varying communication graph},
  author={Lu, Yang and Yu, Zhengxin and Suri, Neeraj},
  journal={ACM Transactions on Privacy and Security},
  volume={26},
  number={3},
  pages={1--39},
  year={2023},
  publisher={ACM New York, NY}
}

@ARTICLE{11271539,
  author={Li, Baosheng and Gao, Weifeng and Deng, Xiumei and Xie, Jin and Xiong, Zehui and Siew, Marie and Guo, Binquan and Mao, Shiwen and Han, Zhu},
  journal={IEEE Transactions on Mobile Computing}, 
  title={Decentralized Federated Learning Over Time-Varying and Heterogeneous Mobile Computing Networks}, 
  year={2026},
  volume={25},
  number={5},
  pages={6688-6704},
  doi={10.1109/TMC.2025.3638598}}

@ARTICLE{9815556,
  author={Fang, Cheng and Yang, Zhixiong and Bajwa, Waheed U.},
  journal={IEEE Transactions on Signal and Information Processing over Networks}, 
  title={BRIDGE: Byzantine-Resilient Decentralized Gradient Descent}, 
  year={2022},
  volume={8},
  number={},
  pages={610-626},
  doi={10.1109/TSIPN.2022.3188456}}

@ARTICLE{9870745,
  author={Che, Chunjiang and Li, Xiaoli and Chen, Chuan and He, Xiaoyu and Zheng, Zibin},
  journal={IEEE Transactions on Parallel and Distributed Systems}, 
  title={A Decentralized Federated Learning Framework via Committee Mechanism With Convergence Guarantee}, 
  year={2022},
  volume={33},
  number={12},
  pages={4783-4800},
  doi={10.1109/TPDS.2022.3202887}}

@article{wu2023byzantine,
  title={Byzantine-resilient decentralized stochastic optimization with robust aggregation rules},
  author={Wu, Zhaoxian and Chen, Tianyi and Ling, Qing},
  journal={IEEE transactions on signal processing},
  volume={71},
  pages={3179--3195},
  year={2023},
  publisher={IEEE}
}

@ARTICLE{10070815,
  author={Tao, Youming and Cui, Sijia and Xu, Wenlu and Yin, Haofei and Yu, Dongxiao and Liang, Weifa and Cheng, Xiuzhen},
  journal={IEEE Transactions on Computers}, 
  title={Byzantine-Resilient Federated Learning at Edge}, 
  year={2023},
  volume={72},
  number={9},
  pages={2600-2614},
  doi={10.1109/TC.2023.3257510}}

@article{perrey2013trail,
  title={TRAIL: Topology authentication in RPL},
  author={Perrey, Heiner and Landsmann, Martin and Ugus, Osman and Schmidt, Thomas C and W{\"a}hlisch, Matthias},
  journal={arXiv preprint arXiv:1312.0984},
  year={2013}
}

@INPROCEEDINGS{7345272,
  author={Li, Chuanyou and Hurfin, Michel and Wang, Yun},
  booktitle={2015 IEEE Trustcom/BigDataSE/ISPA}, 
  title={Reputation Propagation and Updating in Mobile Ad Hoc Networks with Byzantine Failures}, 
  year={2015},
  volume={1},
  number={},
  pages={111-118},
  doi={10.1109/Trustcom.2015.364}}

@ARTICLE{9849010,
  author={Miao, Yinbin and Liu, Ziteng and Li, Hongwei and Choo, Kim-Kwang Raymond and Deng, Robert H.},
  journal={IEEE Transactions on Information Forensics and Security}, 
  title={Privacy-Preserving Byzantine-Robust Federated Learning via Blockchain Systems}, 
  year={2022},
  volume={17},
  number={},
  pages={2848-2861},
  doi={10.1109/TIFS.2022.3196274}}

\vspace{-0.5em}

\begin{IEEEbiographynophoto}{Shubham Vaishnav}
received a Ph.D. from the Department of Computer and Systems Science, Stockholm University, Sweden, in June, 2026. He received Bachelor's and Master's degrees in Computer Science \& Engineering from IIT (ISM), Dhanbad in 2017 and 2019, respectively. His research interests include reinforcement learning, federated  Learning, distributed systems, optimization, and adaptive decision-making in dynamic environments such as IoT.
\end{IEEEbiographynophoto}

\begin{IEEEbiographynophoto}{Murtaza Rangwala}
received his B.Eng. degree (Honours) in Software Engineering from Monash University in 2022. He is currently a Ph.D. candidate with the Quantum Cloud Computing and Distributed Systems (qCLOUDS) Laboratory, School of Computing and Information Systems, University of Melbourne, Australia. His research interests include distributed privacy-preserving machine learning, Byzantine-robust distributed systems, and blockchain-enabled trust frameworks for IoT and cloud computing environments.
\end{IEEEbiographynophoto}

\begin{IEEEbiographynophoto}{Ali Beikmohammadi}
received the B.Sc. degree in electrical engineering from Bu-Ali Sina University, Hamedan, Iran, in 2017, and the M.Sc. degree in electrical engineering from the Amirkabir University of Technology, Tehran, Iran, in 2019. He is 
currently working toward the Ph.D. degree in computer and systems sciences at Stockholm University, Stockholm, Sweden. His research interests include reinforcement learning, deep learning, and federated learning, both in theory and applications.
%received the B.Sc. degree in electrical engineering from Bu-Ali Sina University, Hamedan, Iran, in 2017, and the M.Sc. degree in electrical engineering with the Amirkabir University of Technology, Tehran, Iran, in 2019. Currently, he is working toward the Ph.D. degree in computer and systems sciences with Stockholm University, Sweden, focusing on research areas such as reinforcement learning, deep learning, and federated learning, both in theory and applications.
\end{IEEEbiographynophoto}

\begin{IEEEbiographynophoto}{Sindri Magnússon}
is an Associate Professor in the Department of Computer and Systems Science at Stockholm University, Sweden. He received a PhD in Electrical Engineering from KTH Royal Institute of Technology, Stockholm, Sweden, in 2017. He was a postdoctoral researcher 2018-2019 at Harvard University, Cambridge, MA. His research interests include large-scale distributed/parallel optimization, machine learning, and control.
\end{IEEEbiographynophoto}

\begin{IEEEbiographynophoto}{Rajkumar Buyya}
is a Redmond Barry Distinguished Professor and Director of the Quantum Cloud Computing and Distributed Systems (qCLOUDS) Laboratory at the University of Melbourne, Australia. Recognized as one of the world’s
most highly cited researchers in computer science and software engineering (h-index: 181; g-index:
394; 170,000+ citations), Dr. Buyya is a Fellow of IEEE, a Foreign Fellow of Academia Europaea, and a Fellow of ACM. He co-founded five major IEEE/ACM international conferences—CCGrid, Cluster, Grid, e-Science, and UCC—and served as the Chair of their inaugural meetings. He served as the founding Editor-in-Chief of the IEEE Transactions on Cloud Computing. He is currently serving as Co-Editor-in-Chief of the Journal of Software: Practice and  Experience, which was established 55+ years ago.
\end{IEEEbiographynophoto}

\appendix

This appendix provides: (i) statements and proofs of
the three lemmas (structural invariants, consensus contraction,
descent) supporting the convergence analysis of Section~VI in the
main paper; (ii) full proofs of Theorem~1, Corollary~1, and
Proposition~1 stated in Section~VI; (iii) the statement and proof
of Theorem~2, extending Theorem~1 to $E \ge 1$ local SGD steps
(referenced from but not stated in the main paper); and (iv) a
distributed-backend verification confirming that the coordinator-free
ZeroMQ implementation reproduces the simulation-backend accuracy
trajectories reported in Section~VII of the main paper.

References to assumptions, equations, and statements (e.g.,
Assumption~3, Theorem~1) refer to the main paper. Equation labels
prefixed ``S.'' are local to this document. We use
$J := \tfrac{1}{N}\mathbf{1}\mathbf{1}^\top$ for the averaging
matrix, $U^t := (I-J)\Theta^t$ for the consensus error, $G^t$ for
the stacked stochastic gradients, and $f^\star := \inf_x f(x)$.

\textbf{Notation Note:} Consistent with Section~VI of the main
paper, when analyzing convergence under perfect screening, $N$
denotes the size of the isolated compatible component $\mathcal{C}$,
and $f(x)$ denotes the objective restricted to that component.

\section{Structural properties of $W^t$}
\label{sec:structural}

\begin{lemma}[Structural invariants]
\label{lem:structural}
For every epoch $t$, the matrix $W^t$ produced by Algorithm~1 satisfies:
\begin{enumerate}
    \item \emph{Screened support:} $w_{ij}^t = 0$ for every $j \notin C_i^t \cup \{i\}$.
    \item \emph{Row-stochasticity:} $w_{ij}^t \ge 0$ and $\sum_{j} w_{ij}^t = 1$ for every $i$.
    \item \emph{Double stochasticity (under Assumption~3):} $\sum_i w_{ij}^t = 1$ for every $j$.
\end{enumerate}
The proof is given in the appendix.
\end{lemma}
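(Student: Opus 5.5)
The three properties can be read off the aggregation step of Algorithm~\ref{alg:main} as specified in Section~\ref{sec:collab_score}, plus Assumption~3 for the third. I would first fix what ``the matrix $W^t$ produced by Algorithm~1'' means. For each node $i$, the update
$\theta_i^{t+1} = w_{ii}^t\theta_i^{t+\frac12} + \sum_{j\in C_i^t} w_{ij}^t\theta_j^{t+\frac12}$
is a linear combination of the intermediate iterates. Row $i$ of $W^t$ is its coefficient vector, with $w_{ij}^t:=0$ for every $j$ that does not appear in the sum. Here $C_i^t$ is the trust-screened set: the peers that survived $\operatorname{TopB}$, the model-compatibility gate, the topology-trust gate and the norm-sanity gate. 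A peer failing any gate is dropped from $C_i^t$ and so receives coefficient zero.

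\emph{Screened support.} This follows directly from the convention just fixed. Coordinate $j$ of row $i$ can be nonzero only if $\theta_j^{t+\frac12}$ enters node $i$'s aggregate, which happens only for $j\in C_i^t\cup\{i\}$. I would also note that a Byzantine peer rejected at any gate falls outside this set. Hence $\delta_{\max}$ only counts Byzantine peers that passed all gates.

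\emph{Row-stochasticity.} This uses the normalization constraint $\sum_{j\in C_i^t\cup\{i\}} w_{ij}^t=1$ stated with the aggregation rule, together with the support property, which makes the full row sum $\sum_{j=1}^N w_{ij}^t$ equal to this restricted sum. For nonnegativity I would argue that the \texttt{MURMURA} trust-normalization rule assigns the self-weight $\omega\in[0,1]$ and splits the remaining mass $1-\omega$ proportionally over the selected peers. The weights should be proportional to nonnegative scores, for instance $\max\{q_{ij}^t,0\}$ normalized over $C_i^t$.

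This nonnegativity step is the main obstacle. The raw score $q_{ij}^t$ contains the term $-\lambda_4 c_{ij}^{comm,t}$ and could be negative, and if every score in $C_i^t$ vanished, proportional normalization would be undefined. I would handle both cases by stating the normalization explicitly: weights are proportional to positive-part scores over $C_i^t$, and the rule falls back to $w_{ii}^t=1$ when $C_i^t=\emptyset$ or all clipped scores are zero. With this rule every row remains a valid probability vector.

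\emph{Double stochasticity.} I would not try to derive this from the asymmetric trust scores, since in general it fails for them. Instead it is exactly the content of Assumption~3: the aggregation step is realized either by symmetrizing the scores or by Metropolis--Hastings reweighting over the screened graph, and in both cases the column sums are one. For completeness I would sketch the Metropolis--Hastings case. Take the symmetric screened edge set $\{(i,j): j\in C_i^t,\ i\in C_j^t\}$ and set $w_{ij}^t = 1/(1+\max\{d_i,d_j\})$ on edges, with $w_{ii}^t = 1-\sum_{j\ne i} w_{ij}^t$. Then $W^t$ is symmetric and row-stochastic, hence also column-stochastic. It is nonnegative because each off-diagonal entry is at most $1/(1+d_i)$, so $w_{ii}^t\ge 1/(1+d_i)>0$. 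The support is still contained in $C_i^t\cup\{i\}$ because we intersect rather than union the collaborator relations, so property 1 is preserved.
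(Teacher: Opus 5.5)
Your proposal is correct and follows the same route as the paper. Screened support and row-stochasticity are read directly off the aggregation rule and its normalization $\sum_{j\in C_i^t\cup\{i\}} w_{ij}^t=1$, and double stochasticity is simply the content of Assumption~3, realizable by Metropolis--Hastings reweighting over the screened support. Your explicit positive-part normalization, the fallback $w_{ii}^t=1$, and your check of the Metropolis--Hastings weights tighten what the paper asserts in one line as the ``nonnegativity of trust-based weights''. That is worthwhile, since $q_{ij}^t$ can indeed be negative, although the problem formulation already imposes $w_{ij}^t\ge 0$ in its definition of $W^t$.
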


To state the next results, we stack the local iterates and stochastic gradients into matrices
$$
\Theta^t = \big[\theta_1^t,\dots,\theta_N^t\big]^\top, \quad G^t = \big[g_1(\theta_1^t,\xi_1^t),\dots,g_N(\theta_N^t,\xi_N^t)\big]^\top,
$$
both in $\mathbb{R}^{N \times d}$, and define the \emph{consensus error}
$$
U^t = (I - J)\Theta^t,
$$
which measures how far the local models drift from their mean. With $\hat g^t := \tfrac{1}{N}\sum_{i} g_i(\theta_i^t,\xi_i^t)$, double stochasticity of $W^t$ implies that the average iterate $\bar\theta^t$ evolves according to
\begin{equation}
\label{eq:avg_iterate}
\bar\theta^{t+1} = \bar\theta^t - \eta \hat g^t,
\end{equation}
which is the device that lets the analysis ``see'' the global objective $f$ through the local dynamics.

\subsection{Proof of Lemma 1 (Structural invariants of \texorpdfstring{$W^t$}{W	extasciicircum t})}
\label{sup:lem_structural}

\begin{proof}
Item 1 (screened support) is immediate from the aggregation step of Algorithm~1 (main paper, line~34), which restricts the sum to $j \in C_i^t \cup \{i\}$. Item 2 (row-stochasticity) follows from the normalization $\sum_{j \in C_i^t \cup \{i\}} w_{ij}^t = 1$ used in the aggregation equation of Section~III, together with the nonnegativity of trust-based weights. Item 3 (double stochasticity) is the refinement provided by Assumption~3, which can be enforced by Metropolis--Hastings reweighting over the screened support.
\end{proof}

\section{Consensus contraction lemma}

\begin{lemma}[Consensus contraction]
\label{lem:consensus}
Under Assumptions~1--4, for every $t \ge 0$,
\begin{equation}
\label{eq:consensus}
\mathbb{E}\| U^{t+1} \|_F^2 \le \Big( 1 - \tfrac{p}{2} \Big) \mathbb{E}\| U^t \|_F^2 + \frac{2 \eta^2}{p}\, \mathbb{E}\| (I - J) G^t \|_F^2.
\end{equation}
The proof is given in the appendix.
\end{lemma}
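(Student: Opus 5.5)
We first write the one-step recursion for the stacked iterates under perfect screening with $E=1$:
$$
\Theta^{t+1} = W^t(\Theta^t - \eta G^t).
$$
Multiplying by $(I-J)$ and using double stochasticity (Lemma~\ref{lem:structural}, item 3) gives $(I-J)W^t = W^t - J = (W^t-J)(I-J)$, because $JW^t = J$ and $W^tJ = J$. Hence
$$
U^{t+1} = (W^t - J)\big(U^t - \eta (I-J)G^t\big).
$$
The matrix $Y^t := U^t - \eta(I-J)G^t$ satisfies $\mathbf 1^\top Y^t = 0$. Assumption~4 therefore applies directly, giving
$$
\|U^{t+1}\|_F^2 \le (1-p)\,\|U^t - \eta(I-J)G^t\|_F^2 .
$$
This contraction holds pointwise for each realization of $W^t$, so no independence between $W^t$ and the iterates is needed at this step.

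Next we expand the square using Young's inequality $\|a+b\|^2 \le (1+\alpha)\|a\|^2 + (1+\alpha^{-1})\|b\|^2$ with $\alpha = p/2$:
$$
(1-p)\|Y^t\|_F^2 \le (1-p)(1+\tfrac p2)\|U^t\|_F^2 + (1-p)\big(1+\tfrac 2p\big)\eta^2\|(I-J)G^t\|_F^2 .
$$
We close with two scalar inequalities valid for $p\in(0,1]$:
\begin{itemize}
  \item $(1-p)(1+p/2) = 1 - p/2 - p^2/2 \le 1 - p/2$;
  \item $(1-p)(1+2/p) = (1-p)(p+2)/p \le 2/p$, since $(1-p)(p+2) = 2 - p - p^2 \le 2$.
\end{itemize}
Taking expectations then yields exactly the bound in \eqref{eq:consensus}. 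Assumptions~1--2 are not actually used here. They only become relevant later, when $\mathbb E\|(I-J)G^t\|_F^2$ is itself bounded in terms of $\sigma^2$, $\tau_{\mathrm{scr}}^2$, and $L^2\|U^t\|_F^2$.

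We expect the main obstacle to be the commutation identity $(I-J)W^t = (W^t-J)(I-J)$ together with verifying that $Y^t$ lies in the zero-mean subspace where Assumption~4 is applicable. Both rely essentially on column-stochasticity. With only row-stochastic $W^t$, $JW^t \ne J$ and the average would drift, which is why Assumption~3 is invoked. Once this reduction is in place, the rest is the routine Young-inequality bookkeeping above, with the choice $\alpha = p/2$ selected to produce the $1-p/2$ and $2/p$ constants.
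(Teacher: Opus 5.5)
Your proof is correct and follows the paper's route: you use the same identity $U^{t+1} = (W^t - J)\big(U^t - \eta(I-J)G^t\big)$ from double stochasticity, apply Assumption~4 to the zero-column-mean matrix, and finish with a Young's-inequality split. The only difference is the Young parameter: you take $\alpha = p/2$, while the paper takes $\alpha = \tfrac{p}{2(1-p)}$, which gives the first coefficient exactly as $1-\tfrac{p}{2}$ but requires treating $p=1$ separately; both choices relax to the same constants $1-\tfrac{p}{2}$ and $\tfrac{2}{p}$.
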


\subsection{Proof of Lemma 2 (Consensus contraction)}
\label{sup:lem_consensus}

\begin{proof}
The local-update and aggregation steps yield $\Theta^{t+1} = W^t (\Theta^t - \eta G^t)$. By double stochasticity (Assumption~3), $W^t J = J W^t = J$ and $J^2 = J$, so $(W^t - J) J = 0$. Subtracting the mean rows from $\Theta^t$ and $G^t$ inside the product is therefore a no-op:
\begin{align*}
U^{t+1} &= (I - J)\, \Theta^{t+1} \\
&= (W^t - J)(\Theta^t - \eta G^t) \\
&= (W^t - J)\big(U^t - \eta (I - J) G^t\big).
\end{align*}
Let $V^t := (I - J) G^t$. Both $U^t$ and $V^t$ have zero column means, so by Assumption~4 (consensus contraction),
$$
\| (W^t - J)(U^t - \eta V^t) \|_F^2 \le (1 - p)\, \| U^t - \eta V^t \|_F^2.
$$
Apply Young's inequality with parameter $\alpha = \tfrac{p}{2(1-p)} > 0$ (the case $p = 1$ is trivial):
$$
\| U^t - \eta V^t \|_F^2 \le (1+\alpha)\, \| U^t \|_F^2 + \Big(1 + \tfrac{1}{\alpha}\Big)\, \eta^2 \| V^t \|_F^2.
$$
With this choice of $\alpha$:
$$
(1-p)(1+\alpha) = 1 - \tfrac{p}{2}, \quad (1-p)\Big(1+\tfrac{1}{\alpha}\Big) = \tfrac{(1-p)(2-p)}{p} \le \tfrac{2}{p}.
$$
Combining and taking expectation yields the bound of Lemma~2.
\end{proof}

\section{Descent lemma on the average iterate}
\label{sec:descent}

\begin{lemma}[Descent inequality]
\label{lem:descent}
Under Assumptions~1--3 and stepsize $\eta \le \tfrac{1}{L}$,
\begin{equation}
\label{eq:descent}
\begin{aligned}
\mathbb{E}[f(\bar\theta^{t+1})] \le\;& \mathbb{E}[f(\bar\theta^t)] - \frac{\eta}{2}\, \mathbb{E}\|\nabla f(\bar\theta^t)\|^2 \\
& + \frac{\eta L^2}{2N}\, \mathbb{E}\|U^t\|_F^2 + \frac{L \eta^2 \sigma^2}{2N}.
\end{aligned}
\end{equation}
The proof is given in the appendix.
\end{lemma}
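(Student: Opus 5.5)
We prove the descent inequality of Lemma~\ref{lem:descent} by applying $L$-smoothness of $f$ to the averaged recursion \eqref{eq:avg_iterate}, $\bar\theta^{t+1}=\bar\theta^t-\eta\hat g^t$. That recursion comes from Assumption~3: since $\mathbf 1^\top W^t=\mathbf 1^\top$, averaging $\Theta^{t+1}=W^t(\Theta^t-\eta G^t)$ over rows gives it exactly. Because $f=\tfrac1N\sum_i F_i$ is $L$-smooth (Assumption~1),
$$
f(\bar\theta^{t+1})\le f(\bar\theta^t)-\eta\langle\nabla f(\bar\theta^t),\hat g^t\rangle+\tfrac{L\eta^2}{2}\|\hat g^t\|^2 .
$$
We then take the expectation conditional on the history $\mathcal F^t$, i.e.\ on $\Theta^t$. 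Let $\bar h^t:=\tfrac1N\sum_i\nabla F_i(\theta_i^t)$. By Assumption~2, $\mathbb E[\hat g^t\mid\mathcal F^t]=\bar h^t$. The samples are independent across clients, so the variance splits and
$$
\mathbb E[\|\hat g^t\|^2\mid\mathcal F^t]\le\|\bar h^t\|^2+\tfrac{\sigma^2}{N}.
$$
This produces the term $\tfrac{L\eta^2\sigma^2}{2N}$.

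For the inner product we use the polarization identity
$$
-\langle a,b\rangle=\tfrac12\bigl(\|a-b\|^2-\|a\|^2-\|b\|^2\bigr),
$$
with $a=\nabla f(\bar\theta^t)$ and $b=\bar h^t$. This gives
$$
-\eta\langle\nabla f(\bar\theta^t),\bar h^t\rangle=-\tfrac\eta2\|\nabla f(\bar\theta^t)\|^2-\tfrac\eta2\|\bar h^t\|^2+\tfrac\eta2\|\nabla f(\bar\theta^t)-\bar h^t\|^2 .
$$
The term $-\tfrac\eta2\|\bar h^t\|^2$ absorbs $\tfrac{L\eta^2}{2}\|\bar h^t\|^2$ whenever $\eta\le 1/L$; this is where the stepsize condition enters.

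It remains to control the mismatch term $\|\nabla f(\bar\theta^t)-\bar h^t\|^2$. We write it as $\|\tfrac1N\sum_i(\nabla F_i(\bar\theta^t)-\nabla F_i(\theta_i^t))\|^2$. Jensen's inequality and $L$-smoothness of each $F_i$ bound it by
$$
\tfrac{L^2}{N}\sum_i\|\theta_i^t-\bar\theta^t\|^2=\tfrac{L^2}{N}\|U^t\|_F^2,
$$
since the rows of $U^t=(I-J)\Theta^t$ are exactly $\theta_i^t-\bar\theta^t$. Multiplying by $\tfrac\eta2$ gives the consensus term $\tfrac{\eta L^2}{2N}\|U^t\|_F^2$. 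Taking total expectation then yields \eqref{eq:descent}.

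The only delicate point is the conditional-expectation bookkeeping. We must use the variance decomposition with cross-client independence to obtain the $1/N$ factor, rather than the crude bound $\sigma^2$. We must also check that the polarization step leaves the $\|\bar h^t\|^2$ term with nonpositive total coefficient, namely $-\tfrac\eta2+\tfrac{L\eta^2}{2}\le0$. Everything else is routine. Perfect screening is implicit in using Assumption~3 on the component $\mathcal C$, and no Byzantine terms arise.
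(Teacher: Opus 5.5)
Your proposal is correct and follows the paper's proof essentially step for step: you apply $L$-smoothness to the averaged recursion, use the conditional bias--variance split to get $\sigma^2/N$ from cross-client independence, apply the polarization identity and drop the $\|h^t\|^2$ term for $\eta\le 1/L$, and bound the mismatch by $\tfrac{L^2}{N}\|U^t\|_F^2$ via Jensen and $L$-smoothness. Like the paper, you rely on independence of the stochastic gradients across clients, which is used implicitly rather than stated in Assumption~2.
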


\subsection{Proof of Lemma 3 (Descent inequality)}
\label{sup:lem_descent}

\begin{proof}
$L$-smoothness of $f$ applied to the average-iterate update $\bar\theta^{t+1} = \bar\theta^t - \eta \hat g^t$ gives
$$
f(\bar\theta^{t+1}) \le f(\bar\theta^t) - \eta \langle \nabla f(\bar\theta^t), \hat g^t \rangle + \frac{L \eta^2}{2} \| \hat g^t \|^2.
$$
Define the local-mean gradient $h^t := \tfrac{1}{N}\sum_i \nabla F_i(\theta_i^t)$, so that $\mathbb{E}[\hat g^t \mid \mathcal{F}^t] = h^t$, where $\mathcal{F}^t$ is the $\sigma$-algebra of all randomness through round $t$. By Assumption~2 and independence of $g_i$ across clients, $\mathrm{Var}(\hat g^t \mid \mathcal{F}^t) \le \sigma^2/N$, so
$$
\mathbb{E}\big[ \| \hat g^t \|^2 \mid \mathcal{F}^t \big] \le \| h^t \|^2 + \tfrac{\sigma^2}{N}.
$$
Taking conditional expectation in the smoothness inequality:
\begin{equation}
\label{eq:S_descent_intermediate}
\begin{aligned}
\mathbb{E}[f(\bar\theta^{t+1}) \mid \mathcal{F}^t] \le\;& f(\bar\theta^t) - \eta \langle \nabla f(\bar\theta^t), h^t \rangle \\
& + \frac{L \eta^2}{2} \| h^t \|^2 + \frac{L \eta^2 \sigma^2}{2N}.
\end{aligned}
\end{equation}
Apply the polarization identity $\langle a, b \rangle = \tfrac{1}{2}(\|a\|^2 + \|b\|^2 - \|a-b\|^2)$:
\begin{align*}
&-\eta \langle \nabla f(\bar\theta^t), h^t \rangle + \tfrac{L\eta^2}{2}\|h^t\|^2 \\
&\quad = -\tfrac{\eta}{2}\|\nabla f(\bar\theta^t)\|^2 + \tfrac{\eta}{2}\|\nabla f(\bar\theta^t) - h^t\|^2 \\
&\quad\quad + \big(\tfrac{L\eta^2}{2} - \tfrac{\eta}{2}\big) \|h^t\|^2.
\end{align*}
For $\eta \le 1/L$, the coefficient of $\|h^t\|^2$ is nonpositive, so that term may be dropped. Jensen's inequality and $L$-smoothness of each $F_i$ bound the bias:
\begin{align*}
\| \nabla f(\bar\theta^t) - h^t \|^2 &= \Big\| \tfrac{1}{N}\sum_i\!\big[\nabla F_i(\bar\theta^t) - \nabla F_i(\theta_i^t)\big] \Big\|^2 \\
&\le \tfrac{L^2}{N}\, \|U^t\|_F^2.
\end{align*}
Substituting back into~\eqref{eq:S_descent_intermediate} and taking total expectation yields Lemma~3.
\end{proof}

\section{Proof of Theorem 1 (Convergence under perfect screening, $E=1$)}
\label{sup:thm_main}

\begin{proof}
We combine Lemmas~2 and~3 through a coupled-inequality argument. Throughout, define
\begin{align*}
a_t &:= \mathbb{E}\|U^t\|_F^2, \quad b_t := \mathbb{E}\|(I-J)G^t\|_F^2, \\
S &:= \sum_{t=0}^{T-1}\mathbb{E}\|\nabla f(\bar\theta^t)\|^2.
\end{align*}

\paragraph{Step 1: Cumulative consensus error}
Lemma~2 gives $a_{t+1} \le (1-\tfrac{p}{2})a_t + \tfrac{2\eta^2}{p} b_t$. With $a_0 = 0$ (consensus initialization), iterating yields $a_t \le \tfrac{2\eta^2}{p}\sum_{s=0}^{t-1}(1-\tfrac{p}{2})^{t-1-s} b_s$. Summing over $t$ and exchanging summation order,
\begin{equation}
\label{eq:S_sum_consensus}
\sum_{t=0}^{T-1} a_t \le \frac{2\eta^2}{p}\sum_{s=0}^{T-1} b_s \sum_{k=0}^\infty\!\Big(1-\tfrac{p}{2}\Big)^{\!k} = \frac{4\eta^2}{p^2}\sum_{t=0}^{T-1} b_t.
\end{equation}

\paragraph{Step 2: Bound on $b_t$}
The variance identity $\sum_i \|g_i^t - \hat g^t\|^2 \le \sum_i \|g_i^t\|^2$ combined with Assumption~2 ($\mathbb{E}\|g_i^t - \nabla F_i(\theta_i^t)\|^2 \le \sigma^2$) gives
$$
b_t \le \sum_{i=1}^N \mathbb{E}\|\nabla F_i(\theta_i^t)\|^2 + N\sigma^2.
$$
Apply $\|a\|^2 \le 2\|a-b\|^2 + 2\|b\|^2$ twice (first with $b = \nabla F_i(\bar\theta^t)$, then with $b = \nabla f(\bar\theta^t)$), use $L$-smoothness, sum over $i$, and apply Assumption~5:
$$
\sum_i \|\nabla F_i(\theta_i^t)\|^2 \le 2L^2 \|U^t\|_F^2 + 4N\tau_{\mathrm{scr}}^2 + 4N\|\nabla f(\bar\theta^t)\|^2.
$$
Taking expectation,
\begin{equation}
\label{eq:S_bt_bound}
\begin{aligned}
b_t \le\;& 2L^2 a_t + 4N\tau_{\mathrm{scr}}^2 + N\sigma^2 \\
& + 4N\,\mathbb{E}\|\nabla f(\bar\theta^t)\|^2.
\end{aligned}
\end{equation}

\paragraph{Step 3: Closing the recursion}
Substituting~\eqref{eq:S_bt_bound} into~\eqref{eq:S_sum_consensus} and collecting the $\sum_t a_t$ term on the left:
$$
\Big(1 - \tfrac{8\eta^2 L^2}{p^2}\Big)\sum_t a_t \le \tfrac{4\eta^2 N}{p^2}\big[\, 4T\tau_{\mathrm{scr}}^2 + T\sigma^2 + 4S \, \big].
$$
For $\eta \le p/(4L)$, the left factor is at least $1/2$, so
\begin{equation}
\label{eq:S_sum_at}
\sum_t a_t \le \frac{32 N \eta^2}{p^2}\big[\, T(\tau_{\mathrm{scr}}^2 + \sigma^2) + S \, \big].
\end{equation}

\paragraph{Step 4: Combining with the descent lemma}
Summing Lemma~3 over $t = 0, \dots, T-1$ and telescoping the $f$-difference:
$$
\tfrac{\eta}{2}\, S \le f(\bar\theta^0) - f^\star + \tfrac{\eta L^2}{2N}\sum_t a_t + \tfrac{L\eta^2 \sigma^2 T}{2N}.
$$
Substituting~\eqref{eq:S_sum_at},
\begin{multline*}
\tfrac{\eta}{2}\, S \le (f^0 - f^\star) + \tfrac{L\eta^2\sigma^2 T}{2N} \\
+ \tfrac{16 L^2 \eta^3}{p^2}\big[\, T(\tau_{\mathrm{scr}}^2 + \sigma^2) + S \, \big].
\end{multline*}
For $\eta \le p/(8L)$, $\tfrac{16L^2\eta^3}{p^2} \le \tfrac{\eta}{4}$, so the $S$-term on the right is absorbed:
\begin{align*}
\tfrac{\eta}{4}\, S \le (f^0 - f^\star) + \tfrac{L\eta^2 \sigma^2 T}{2N} \\
+ \tfrac{16 L^2 \eta^3 T (\tau_{\mathrm{scr}}^2 + \sigma^2)}{p^2}.
\end{align*}
Dividing both sides by $\eta T/4$ yields the bound stated in Theorem~1.
\end{proof}

\section{Extension to local SGD ($E \ge 1$)}
\label{sec:local_sgd}

The convergence result extends to the case where each client performs $E \ge 1$ local SGD steps per round (as in Algorithm~1, line~24), at the cost of one additional assumption: the second moment of the stochastic gradient must be uniformly bounded (Assumption~6). This is a stronger condition than Assumption~2 (bounded \emph{variance}), and it is what controls the local drift $\|\theta_i^{t,k} - \theta_i^t\|$ over the $E$ local steps.

The local update is parameterized by a local stepsize $\eta_l$, with $\theta_i^{t,0} = \theta_i^t$ and
$$
\theta_i^{t,k+1} = \theta_i^{t,k} - \eta_l\, g_i(\theta_i^{t,k}, \xi_i^{t,k}), \quad k=0,\dots,E-1,
$$
followed by aggregation $\theta_i^{t+1} = \sum_j w_{ij}^t \theta_j^{t, E}$. Define the effective per-round stepsize $\eta := E \eta_l$.

\paragraph{Assumption 6: Bounded gradient second moments.}
There exists $G \ge 0$ such that for all $i \in \mathcal{V}$, $t \ge 0$, and points $x$ visited by the iterates,
$$
\mathbb{E}\| g_i(x, \xi_i) \|^2 \le G^2.
$$
This stronger assumption is invoked only in Theorem~\ref{thm:local_sgd} (local SGD with $E > 1$) and Proposition~1 (imperfect screening). It is not needed for Theorem~1. Note that $G^2 \ge \sigma^2 + \sup_x \|\nabla f(x)\|^2$, and the boundedness typically follows from a uniform Lipschitz/bounded-gradient regularization of the loss.

\setcounter{theorem}{1}
\begin{theorem}[Convergence with $E$ local steps, perfect screening]
\label{thm:local_sgd}
Suppose Assumptions~1--6 hold, screening is perfect ($\delta_{\max} = 0$, $\Delta_{\mathrm{scr}}^2 = 0$), and the iterates are initialized at consensus. If the local stepsize satisfies $\eta_l \le \tfrac{1}{2L E}$ (equivalently, $\eta \le \tfrac{1}{2L}$), then for any horizon $T \ge 1$,
\begin{equation}
\label{eq:local_sgd_bound}
\begin{aligned}
\frac{1}{T} \sum_{t=0}^{T-1} \mathbb{E} \big\| \nabla f(\bar\theta^t) \big\|^2
\le\;& \frac{2 \big(f(\bar\theta^0) - f^\star\big)}{\eta T}
+ \frac{2L\eta\, \sigma^2}{NE} \\
& + \frac{(8 + \tfrac{2}{3} p^2)\, L^2\, \eta^2\, G^2}{p^2}.
\end{aligned}
\end{equation}
The proof is given in the appendix.
\end{theorem}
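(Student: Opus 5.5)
The plan mirrors the proof of Theorem~1: a descent inequality on the averaged iterate, coupled with a consensus recursion. The local steps are handled by treating the $E$-step displacement as one effective gradient. For each client $i$ we set $\Delta_i^t := \theta_i^t - \theta_i^{t,E} = \eta_l\sum_{k=0}^{E-1} g_i(\theta_i^{t,k},\xi_i^{t,k})$ and stack these into $D^t$. Aggregation then reads $\Theta^{t+1} = W^t(\Theta^t - D^t)$. By double stochasticity (Assumption~3), $\bar\theta^{t+1} = \bar\theta^t - \tfrac{1}{N}\sum_i \Delta_i^t$, exactly as in \eqref{eq:avg_iterate} with $\eta\hat g^t$ replaced by the averaged displacement.

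\textbf{Step 1 (descent).} We apply $L$-smoothness of $f$ to this update. The inner product is split by writing $\tfrac{1}{N}\sum_i\Delta_i^t = \eta_l\sum_k \tfrac1N\sum_i \nabla F_i(\theta_i^{t,k}) + (\text{martingale noise})$. The noise is a sum of $NE$ conditionally independent zero-mean terms, each with variance at most $\sigma^2$. Its second moment is therefore $\eta_l^2 E\sigma^2/N = \eta^2\sigma^2/(NE)$, which produces the $2L\eta\sigma^2/(NE)$ term after division. For the mean part we use polarization as in Lemma~3. The coefficient of $\|h\|^2$ is nonpositive once $\eta\le 1/(2L)$. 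The remaining bias is
\[
\Big\|\nabla f(\bar\theta^t)-\tfrac{1}{NE}\textstyle\sum_{i,k}\nabla F_i(\theta_i^{t,k})\Big\|^2 \le \tfrac{2L^2}{N}\|U^t\|_F^2 + \tfrac{2L^2}{NE}\textstyle\sum_{i,k}\|\theta_i^{t,k}-\theta_i^t\|^2 .
\]

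\textbf{Step 2 (local drift).} By Assumption~6 and Cauchy--Schwarz, $\mathbb E\|\theta_i^{t,k}-\theta_i^t\|^2 \le k^2\eta_l^2G^2$. Averaging over $k<E$ gives at most $\eta_l^2E^2G^2/3 = \eta^2G^2/3$. This term yields the $\tfrac{2}{3}p^2 L^2\eta^2G^2/p^2$ contribution; it is written in that form only to merge with the consensus term.

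\textbf{Step 3 (consensus).} We rerun Lemma~2 with $\eta V^t$ replaced by $(I-J)D^t$. Since $\|(I-J)D^t\|_F^2\le\|D^t\|_F^2$ and $\mathbb E\|\Delta_i^t\|^2\le \eta^2G^2$ (again by Assumption~6), we get $a_{t+1}\le(1-\tfrac p2)a_t + \tfrac{2}{p}N\eta^2G^2$. Starting from $a_0=0$, this gives $a_t\le 4N\eta^2G^2/p^2$ uniformly. Using the crude bound $G^2$ is a deliberate choice: it avoids the self-referential closure of Step~3 in Theorem~1, so no $p$-dependent stepsize restriction is needed. This matches the stated condition $\eta\le 1/(2L)$.

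\textbf{Step 4 (combination).} We substitute these bounds into the descent inequality, telescope, and divide by $\eta T/2$. This should give $\tfrac{2(f^0-f^\star)}{\eta T} + \tfrac{2L\eta\sigma^2}{NE} + 2L^2\cdot\tfrac{4\eta^2G^2}{p^2} + \tfrac{2}{3}L^2\eta^2G^2$, which is \eqref{eq:local_sgd_bound}.

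The main obstacle is the constant bookkeeping in Step~1. The $E$ inner steps make the gradients evaluated at $\theta_i^{t,k}$ depend on earlier noise, so the decomposition must be conditioned step by step. The $\|h\|^2$ term must also still be dominated by the $-\tfrac\eta2\|h\|^2$ term, which is where the factor $\tfrac12$ in $\eta\le 1/(2L)$ is consumed. If the constants do not close exactly, the fallback is to bound the stochastic part of $\|\tfrac1N\sum_i\Delta_i\|^2$ separately via independence, and absorb any factor-of-two slack into the stated constants.
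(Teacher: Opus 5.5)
Your proposal is correct and follows the paper's proof essentially step for step: the same average-iterate update, the martingale-noise bound $\sigma^2/(NE)$ combined with the $\|a+b\|^2\le 2\|a\|^2+2\|b\|^2$ split (which is exactly where $\eta\le 1/(2L)$ is used), the drift bound $\eta_l^2k^2G^2$ with $\sum_k k^2\le E^3/3$, and the consensus recursion driven by the crude $NE^2G^2$ bound, which avoids any $p$-dependent stepsize condition, all telescoped and divided by $\eta T/2$. One small correction: the $NE$ noise terms are martingale differences (independent across clients but not across local steps), which is all the variance computation needs.
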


\begin{remark}[Effect of $E > 1$ and scope]
Relative to Theorem~1, two effects are visible in~\eqref{eq:local_sgd_bound}: (i) the noise floor scales as $\sigma^2/(NE)$ rather than $\sigma^2/N$, reflecting the variance reduction from $E$ stochastic gradients per round; and (ii) the heterogeneity-and-drift term is now expressed via $G^2$, which under $G^2 \gtrsim \tau_{\mathrm{scr}}^2 + \sigma^2$ is comparable to (but generally looser than) the corresponding term in Theorem~1. The bounded-gradient Assumption~6 is restrictive (it does not hold for unregularized least-squares or logistic regression on unbounded data); a more refined analysis along the lines of \cite{koloskova2020unified} recovers the same $\mathcal{O}(\sigma/\sqrt{NET} + 1/T)$ rate without bounded gradients, at the cost of more involved constants.
\end{remark}

\subsection{Proof of Theorem 2 (Convergence with \texorpdfstring{$E \ge 1$}{E >= 1} local steps)}
\label{sup:thm_local_sgd}

\begin{proof}
For $i \in \mathcal V$ and $k = 0, \dots, E$, let $\theta_i^{t,k}$ denote the local iterate after $k$ local steps in round $t$, with $\theta_i^{t,0} = \theta_i^t$. Stack the per-step stochastic gradients as $\widetilde G_i^t := \sum_{k=0}^{E-1} g_i(\theta_i^{t,k}, \xi_i^{t,k})$, so the aggregation step reads $\Theta^{t+1} = W^t (\Theta^t - \eta_l \widetilde G^t)$. Let $\hat g^t := \tfrac{1}{NE}\sum_{i,k} g_i(\theta_i^{t,k}, \xi_i^{t,k})$ be the all-sample average. Double stochasticity gives the average-iterate update $\bar\theta^{t+1} = \bar\theta^t - \eta \hat g^t$ with $\eta = E\eta_l$. Define the true local-gradient average as $\bar h^t := \tfrac{1}{NE}\sum_{i,k} \nabla F_i(\theta_i^{t,k})$.

\paragraph{Step 1: Variance of $\hat g^t$}
Decompose the update into the true gradient and the stochastic noise:
$$
\hat g^t = \bar h^t + \underbrace{\tfrac{1}{NE}\sum_{i,k}\!\big[\, g_i(\theta_i^{t,k}, \xi_i^{t,k}) - \nabla F_i(\theta_i^{t,k}) \,\big]}_{=: \hat g_{\mathrm{noise}}^t}.
$$
Let $\mathcal{F}^{t,k}$ be the filtration capturing all randomness up to local step $k$ in round $t$. The noise terms $g_i(\theta_i^{t,k}, \xi_i^{t,k}) - \nabla F_i(\theta_i^{t,k})$ are martingale differences with respect to $\mathcal{F}^{t,k}$ and are independent across clients. Thus, their sum has variance:
$$
\mathbb{E}\| \hat g_{\mathrm{noise}}^t \|^2 \le \frac{NE\,\sigma^2}{(NE)^2} = \frac{\sigma^2}{NE}.
$$
Because the trajectory $\theta_i^{t,k}$ depends on past noise, $\bar h^t$ and $\hat g_{\mathrm{noise}}^t$ are not strictly orthogonal. Applying the inequality $\|a+b\|^2 \le 2\|a\|^2 + 2\|b\|^2$:
\begin{equation}
\label{eq:S_hat_g_var}
\mathbb{E}\| \hat g^t \|^2 \le 2\mathbb{E}\| \bar h^t \|^2 + \frac{2\sigma^2}{NE}.
\end{equation}

\paragraph{Step 2: Local-drift bound (Assumption 6)}
For $k \in \{0,\dots,E-1\}$, the telescoping identity $\theta_i^{t,k} - \theta_i^t = -\eta_l \sum_{\ell=0}^{k-1} g_i(\theta_i^{t,\ell}, \xi_i^{t,\ell})$ combined with Cauchy--Schwarz and Assumption~6 yields
\begin{equation}
\label{eq:S_drift}
\mathbb{E}\|\theta_i^{t,k} - \theta_i^t\|^2 \le \eta_l^2\, k \!\sum_{\ell=0}^{k-1}\! \mathbb{E}\|g_i^{t,\ell}\|^2 \le \eta_l^2\, k^2\, G^2.
\end{equation}

\paragraph{Step 3: Consensus contraction (local version)}
By the same algebra as Lemma~2 applied to $\Theta^{t+1} = W^t(\Theta^t - \eta_l \widetilde G^t)$:
$$
\mathbb{E}\|U^{t+1}\|_F^2 \le \Big(1-\tfrac{p}{2}\Big)\mathbb{E}\|U^t\|_F^2 + \tfrac{2\eta_l^2}{p}\mathbb{E}\|(I-J)\widetilde G^t\|_F^2.
$$
Bounding the second factor using $\|\widetilde G_i^t\|^2 \le E\sum_k \|g_i^{t,k}\|^2$ (Cauchy--Schwarz) and Assumption~6 gives $\mathbb{E}\|(I-J)\widetilde G^t\|_F^2 \le N E^2 G^2$. Using $\eta = E\eta_l$:
$$
\mathbb{E}\|U^{t+1}\|_F^2 \le \Big(1-\tfrac{p}{2}\Big)\mathbb{E}\|U^t\|_F^2 + \tfrac{2 N\eta^2 G^2}{p}.
$$
Iterating from $U^0 = 0$ and summing the geometric series:
\begin{equation}
\label{eq:S_sum_consensus_local}
\sum_{t=0}^{T-1} \mathbb{E}\|U^t\|_F^2 \le \frac{4 T N \eta^2 G^2}{p^2}.
\end{equation}

\paragraph{Step 4: Descent inequality (local version)}
From $L$-smoothness on $\bar\theta^{t+1} = \bar\theta^t - \eta \hat g^t$ and taking total expectation:
$$
\mathbb{E}[f(\bar\theta^{t+1})] \le \mathbb{E}[f(\bar\theta^t)] - \eta\mathbb{E}\langle \nabla f(\bar\theta^t), \hat g^t \rangle + \frac{L\eta^2}{2}\mathbb{E}\|\hat g^t\|^2.
$$
Since $\mathbb{E}[\hat g^t] = \mathbb{E}[\bar h^t]$, we apply~\eqref{eq:S_hat_g_var} and the polarization identity $-\eta\langle a,b\rangle = -\frac{\eta}{2}\|a\|^2 - \frac{\eta}{2}\|b\|^2 + \frac{\eta}{2}\|a-b\|^2$:
\begin{align*}
\mathbb{E}[f(\bar\theta^{t+1})] \le\;& \mathbb{E}[f(\bar\theta^t)] - \eta\mathbb{E}\langle \nabla f(\bar\theta^t), \bar h^t \rangle + L\eta^2\mathbb{E}\|\bar h^t\|^2 \\
&+ \frac{L\eta^2\sigma^2}{NE} \\
=\;& \mathbb{E}[f(\bar\theta^t)] - \frac{\eta}{2}\mathbb{E}\|\nabla f(\bar\theta^t)\|^2 \\
&+ \frac{\eta}{2}\mathbb{E}\|\nabla f(\bar\theta^t) - \bar h^t\|^2  - \Big(\frac{\eta}{2} - L\eta^2\Big)\mathbb{E}\|\bar h^t\|^2 \\
&+ \frac{L\eta^2 \sigma^2}{NE}.
\end{align*}
For $\eta \le \frac{1}{2L}$, the term $(\frac{\eta}{2} - L\eta^2) \ge 0$, allowing us to drop the negative $\|\bar h^t\|^2$ term.
Bounding the bias via Jensen, $L$-smoothness, and $\|\bar\theta^t - \theta_i^{t,k}\|^2 \le 2\|U_i^t\|^2 + 2\|\theta_i^t - \theta_i^{t,k}\|^2$:
\begin{align*}
\mathbb{E}\|\nabla f(\bar\theta^t) - \bar h^t\|^2 &\le \frac{L^2}{NE}\sum_{i,k} \mathbb{E}\|\bar\theta^t - \theta_i^{t,k}\|^2 \\
&\le \frac{2L^2}{N}\mathbb{E}\|U^t\|_F^2 + \frac{2L^2}{NE}\sum_{i,k}\mathbb{E}\|\theta_i^t - \theta_i^{t,k}\|^2.
\end{align*}
Apply~\eqref{eq:S_drift} and $\sum_{k=0}^{E-1} k^2 \le E^3/3$:
$$
\sum_{i,k}\mathbb{E}\|\theta_i^t - \theta_i^{t,k}\|^2 \le \frac{N \eta_l^2 G^2 E^3}{3}.
$$
With $\eta_l^2 E^2 = \eta^2$, this gives $\frac{2L^2}{NE} \cdot \frac{N\eta_l^2 G^2 E^3}{3} = \frac{2L^2 \eta^2 G^2}{3}$. Combining:
\begin{equation}
\label{eq:S_descent_local}
\begin{aligned}
\mathbb{E}[f(\bar\theta^{t+1})] \le\;& \mathbb{E}[f(\bar\theta^t)] - \frac{\eta}{2}\mathbb{E}\|\nabla f(\bar\theta^t)\|^2 \\
& + \frac{\eta L^2}{N}\mathbb{E}\|U^t\|_F^2 + \frac{\eta^3 L^2 G^2}{3} + \frac{L\eta^2 \sigma^2}{NE}.
\end{aligned}
\end{equation}

\paragraph{Step 5: Combine}
Summing~\eqref{eq:S_descent_local} over $t$ and using~\eqref{eq:S_sum_consensus_local}:
\begin{align*}
\frac{\eta}{2}\sum_t \mathbb{E}\|\nabla f(\bar\theta^t)\|^2
&\le (f^0 - f^\star) + \frac{T L \eta^2 \sigma^2}{NE} \\
&\quad + \frac{\eta L^2}{N} \cdot \frac{4 T N \eta^2 G^2}{p^2} + \frac{T \eta^3 L^2 G^2}{3} \\
&= (f^0 - f^\star) + \frac{T L \eta^2 \sigma^2}{NE} \\
&\quad + T \eta^3 L^2 G^2 \Big(\frac{4}{p^2} + \frac{1}{3}\Big).
\end{align*}
Dividing by $\frac{\eta T}{2}$ and noting that $\frac{8}{p^2} + \frac{2}{3} = \frac{8 + 2p^2/3}{p^2}$, we obtain the final bound:
\begin{align*}
\frac{1}{T} \sum_{t=0}^{T-1} \mathbb{E} \big\| \nabla f(\bar\theta^t) \big\|^2
\le &\frac{2 \big(f(\bar\theta^0) - f^\star\big)}{\eta T}
+ \frac{2L\eta\, \sigma^2}{NE} \\
&\quad + \frac{(8 + \tfrac{2}{3} p^2)\, L^2\, \eta^2\, G^2}{p^2}.
\end{align*}
\end{proof}

\section{Proof of Corollary 1 (Screening improves the bound)}
\label{sup:cor_screening}

% \begin{proof}
% The right-hand side of the bound in Theorem~1 is monotone increasing in $\tau^2$ and monotone decreasing in $p$. Substituting the stated inequalities $\tau_{\mathrm{scr}}^2 \le \tau_{\mathrm{raw}}^2$ and $p_{\mathrm{scr}} \ge p_{\mathrm{raw}} - \epsilon$ into Theorem~1 applied with the screened parameters $(\tau_{\mathrm{scr}}^2, p_{\mathrm{scr}})$ yields the claim. When $\epsilon = 0$, the inequality is strict whenever $\tau_{\mathrm{scr}}^2 < \tau_{\mathrm{raw}}^2$.
% \end{proof}

\begin{proof}
The right-hand side of the bound in Theorem~1, when applied to a specific objective and communication graph, is monotone increasing in the data heterogeneity $\tau^2$ and monotone decreasing in the consensus rate $p$. Substituting the stated inequalities $\tau_{\mathrm{scr}}^2 \le \tau_{\mathrm{raw}}^2$ and $p_{\mathrm{scr}} \ge p_{\mathrm{raw}} - \epsilon$ into the heterogeneity-and-topology term of Theorem~1 (applied with the screened component parameters $\tau_{\mathrm{scr}}^2$ and $p_{\mathrm{scr}}$) directly yields the claim. When screening preserves component connectivity ($\epsilon = 0$), the bound strictly improves whenever component isolation successfully reduces data heterogeneity ($\tau_{\mathrm{scr}}^2 < \tau_{\mathrm{raw}}^2$).
\end{proof}

\section{Proof of Proposition 1 (Imperfect screening)}
\label{sup:prop_perturbation}

\begin{proof}
Let $\mathcal{C}$ be the target honest component of size $N$. For $i \in \mathcal{C}$, the actual aggregation step is $\theta_i^{t+1} = \sum_{j\in\mathcal{C}} w_{ij}^t \theta_j^{t+1/2} + \beta_{i,\mathrm{adv}}^t$, where the adversarial injection is $\beta_{i,\mathrm{adv}}^t = \sum_{b\in\mathcal{B}} w_{ib}^t \theta_b^{B, t+1/2}$. 
Define the matrix $B^t \in \mathbb{R}^{N \times d}$ whose $i$-th row is $\beta_{i,\mathrm{adv}}^t$. By Assumption~8, $\|B^t\|_F \le \sqrt{N} \delta_{\max} \Theta_B$.

We rewrite the global update over $\mathcal{C}$ by injecting and subtracting the ideal doubly stochastic matrix $\widetilde W^t$:
$$
\Theta^{t+1} = \widetilde W^t (\Theta^t - \eta G^t) + \underbrace{(W^t - \widetilde W^t)(\Theta^t - \eta G^t) + B^t}_{=: E^t}.
$$

\paragraph{Step 1: Bounding the perturbation $E^t$}
Taking the Frobenius norm and applying the triangle inequality:
$$
\|E^t\|_F \le \|W^t - \widetilde W^t\|_F \big(\|\Theta^t\|_F + \eta \|G^t\|_F\big) + \|B^t\|_F.
$$
By Assumption~9 (bounded domain), $\|\Theta^t\|_F \le \sqrt{N} R$. By Assumption~6, the expected stochastic gradient norm is bounded by $\mathbb{E}[\|G^t\|_F] \le \sqrt{N} G$. Taking the expectation and applying Jensen's inequality to the screening error ($\mathbb{E}\|W^t - \widetilde W^t\|_F \le \Delta_{\mathrm{scr}}$):
\begin{equation}
\label{eq:S_E_bound}
\mathbb{E}\|E^t\|_F \le \sqrt{N} \big[ \Delta_{\mathrm{scr}} (R + \eta G) + \delta_{\max} \Theta_B \big] = \sqrt{N} \rho.
\end{equation}

\paragraph{Step 2: Average-iterate update and Descent Lemma}
Multiplying the update by $\frac{1}{N}\mathbf{1}^\top$, and noting $\mathbf{1}^\top \widetilde W^t = \mathbf{1}^\top$, the average iterate evolves as:
$$
\bar\theta^{t+1} = \bar\theta^t - \eta \hat g^t + e^t,
$$
where $e^t = \frac{1}{N}\mathbf{1}^\top E^t$. By Jensen's inequality, $\mathbb{E}\|e^t\|^2 \le \frac{1}{N} \mathbb{E}\|E^t\|_F^2 \le \rho^2$.
By the $L$-smoothness of $f$:
$$
f(\bar\theta^{t+1}) \le f(\bar\theta^t) + \langle \nabla f(\bar\theta^t), -\eta \hat g^t + e^t \rangle + \frac{L}{2}\|-\eta \hat g^t + e^t\|^2.
$$
Take the conditional expectation $\mathbb{E}[\cdot \mid \mathcal{F}^t]$. Noting $\mathbb{E}[\hat g^t] = h^t$:
\begin{align*}
\mathbb{E}[f^{t+1}] \le\;& f^t - \eta\langle \nabla f(\bar\theta^t), h^t \rangle + \mathbb{E}\langle \nabla f(\bar\theta^t), e^t \rangle \\
& + \tfrac{L}{2}\mathbb{E}\|{-\eta \hat g^t + e^t}\|^2.
\end{align*}
We bound the cross terms using Young's inequality:
\begin{align*}
\langle \nabla f(\bar\theta^t), e^t \rangle &\le \frac{\eta}{8}\|\nabla f(\bar\theta^t)\|^2 + \frac{2}{\eta}\|e^t\|^2 \\
\|-\eta \hat g^t + e^t\|^2 &\le 2\eta^2\|\hat g^t\|^2 + 2\|e^t\|^2.
\end{align*}
Substituting these back yields:
\begin{align*}
\mathbb{E}[f^{t+1}] \le & f^t - \eta\langle \nabla f, h^t \rangle + \frac{\eta}{8}\|\nabla f\|^2 + L\eta^2\mathbb{E}\|\hat g^t\|^2 \\
& + \Big(\frac{2}{\eta} + L\Big)\mathbb{E}\|e^t\|^2.
\end{align*}
Next, we apply the polarization identity $-\eta\langle \nabla f, h^t \rangle = -\frac{\eta}{2}\|\nabla f\|^2 - \frac{\eta}{2}\|h^t\|^2 + \frac{\eta}{2}\|\nabla f - h^t\|^2$ and bound the stochastic variance $\mathbb{E}\|\hat g^t\|^2 \le \|h^t\|^2 + \frac{\sigma^2}{N}$:
\begin{align*}
\mathbb{E}[f^{t+1}] \le\;& f^t - \frac{3\eta}{8}\|\nabla f\|^2 + \frac{\eta}{2}\|\nabla f - h^t\|^2 + \Big(\frac{2}{\eta} + L\Big)\rho^2 \\
&+ \frac{L\eta^2\sigma^2}{N} + \eta\Big(L\eta - \frac{1}{2}\Big)\|h^t\|^2.
\end{align*}
For $\eta \le \frac{1}{2L}$, the factor $(L\eta - 1/2) \le 0$, allowing us to drop the $\|h^t\|^2$ term. Finally, we bound the gradient bias using Jensen's inequality and $L$-smoothness: $\|\nabla f - h^t\|^2 \le \frac{L^2}{N}\|U^t\|_F^2$. Rearranging to isolate the gradient gives:
\begin{align}
\label{eq:S_prop_descent}
\frac{3\eta}{8}\mathbb{E}\|\nabla f(\bar\theta^t)\|^2 \le & f^t - \mathbb{E}[f^{t+1}] + \frac{\eta L^2}{2N}\mathbb{E}\|U^t\|_F^2 + \frac{L\eta^2\sigma^2}{N} \notag \\
&+ \Big(\frac{2}{\eta} + L\Big)\rho^2.
\end{align}

\paragraph{Step 3: Consensus Contraction}
The consensus error $U^{t+1} = (I-J)\Theta^{t+1}$ satisfies:
$$
U^{t+1} = (I-J)\widetilde W^t (U^t - \eta V^t) + (I-J)E^t.
$$
We apply Young's inequality $\|A + B\|_F^2 \le (1+\alpha_1)\|A\|_F^2 + (1+1/\alpha_1)\|B\|_F^2$ with $\alpha_1 = p/2$. Since $\widetilde W^t$ contracts by $(1-p)$, we have $\|A\|_F^2 \le (1-p)\|U^t - \eta V^t\|_F^2$.
Observe that $(1+p/2)(1-p) = 1 - p/2 - p^2/2 \le 1 - p/2$, and $1 + 1/\alpha_1 = 1 + 2/p \le 3/p$. Thus:
$$
\mathbb{E}\|U^{t+1}\|_F^2 \le (1-\tfrac{p}{2})\mathbb{E}\|U^t - \eta V^t\|_F^2 + \frac{3}{p}\mathbb{E}\|E^t\|_F^2.
$$
We apply Young's inequality again to split the inner term: $\|U^t - \eta V^t\|_F^2 \le (1+\alpha_2)\|U^t\|_F^2 + (1+1/\alpha_2)\eta^2\|V^t\|_F^2$, setting $\alpha_2 = p/4$.
Observe that $(1-p/2)(1+p/4) = 1 - p/4 - p^2/8 \le 1 - p/4$, and $(1-p/2)(1+4/p) \le 4/p$. Using~\eqref{eq:S_E_bound} for $\mathbb{E}\|E^t\|_F^2 \le N\rho^2$, we obtain:
\begin{equation}
\label{eq:S_prop_consensus}
\mathbb{E}\|U^{t+1}\|_F^2 \le (1-\tfrac{p}{4})\mathbb{E}\|U^t\|_F^2 + \frac{4\eta^2}{p}\mathbb{E}\|V^t\|_F^2 + \frac{3N}{p}\rho^2.
\end{equation}

\paragraph{Step 4: Combine}
Summing~\eqref{eq:S_prop_consensus} over $t=0$ to $T-1$ bounds the cumulative consensus error:
$$
\sum_{t=0}^{T-1} \mathbb{E}\|U^t\|_F^2 \le \frac{16\eta^2}{p^2}\sum_{t=0}^{T-1} \mathbb{E}\|V^t\|_F^2 + \frac{12 N T}{p^2}\rho^2.
$$
Recall from Step 2 of Theorem 1 that $\mathbb{E}\|V^t\|_F^2 \le 2L^2 \mathbb{E}\|U^t\|_F^2 + 4N\tau_{\mathrm{scr}}^2 + N\sigma^2 + 4N\mathbb{E}\|\nabla f\|^2$. Substituting this into the sum:
\begin{align*}
\sum_{t=0}^{T-1}  \mathbb{E}\|U^t\|_F^2 \le & \frac{32\eta^2 L^2}{p^2}\sum_{t=0}^{T-1}  \mathbb{E}\|U^t\|_F^2 \\
& \quad + \frac{16\eta^2 N}{p^2}\sum_{t=0}^{T-1} \big(4\tau_{\mathrm{scr}}^2 + \sigma^2 + 4\|\nabla f\|^2\big) \\
& \quad + \frac{12 N T}{p^2}\rho^2.
\end{align*}
For $\eta \le \frac{p}{8L}$, we have $\frac{32\eta^2 L^2}{p^2} \le \frac{1}{2}$. Moving the $U^t$ sum to the left and multiplying by 2:
\begin{align*}
\sum_{t=0}^{T-1} \mathbb{E}\|U^t\|_F^2 \le\;& \frac{32\eta^2 N}{p^2}\sum_{t=0}^{T-1}\!\big(4\tau_{\mathrm{scr}}^2 + \sigma^2 + 4\|\nabla f(\bar\theta^t)\|^2\big) \\
& + \frac{24 N T}{p^2}\rho^2.
\end{align*}
We now substitute this consensus sum into the telescoped Descent Lemma~\eqref{eq:S_prop_descent}:
\begin{align*}
\frac{3\eta}{8}\sum_{t=0}^{T-1} \mathbb{E}\|\nabla f\|^2 \le\;& (f^0 - f^\star) + \frac{L\eta^2\sigma^2 T}{N} + \Big(\frac{2}{\eta} + L\Big)\rho^2 T \\
&+ \frac{64\eta^3 L^2 T \tau_{\mathrm{scr}}^2}{p^2} + \frac{16\eta^3 L^2 T \sigma^2}{p^2} +\\
& \frac{12\eta L^2 T}{p^2}\rho^2 + \frac{64\eta^3 L^2}{p^2}\sum_{t=0}^{T-1} \mathbb{E}\|\nabla f\|^2.
\end{align*}
To absorb the gradient term into the left-hand side, we require its coefficient to be $\le \frac{\eta}{8}$, leaving $\frac{\eta}{4}$ on the left. This requires $\frac{64\eta^3 L^2}{p^2} \le \frac{\eta}{8}$, which simplifies to $\eta^2 \le \frac{p^2}{512 L^2}$, yielding the stepsize condition $\eta \le \frac{p}{24L}$.

Assuming this stepsize, subtracting the gradient term leaves $\frac{\eta}{4}\sum\mathbb{E}\|\nabla f\|^2$ on the left. Dividing the entire inequality by $\frac{\eta T}{4}$ yields the final rigorous bound:
\begin{align*}
\frac{1}{T}\sum_{t=0}^{T-1}\mathbb{E}\|\nabla f(\bar\theta^t)\|^2 \le\;& \frac{4(f^0 - f^\star)}{\eta T} + \frac{4L\eta\sigma^2}{N} \\
&+ \frac{256 L^2 \eta^2 (\tau_{\mathrm{scr}}^2 + \sigma^2/4)}{p^2} + \frac{48 L^2}{p^2}\rho^2 \\
&+ \Big(\frac{8}{\eta^2} + \frac{4L}{\eta}\Big)\rho^2.
\end{align*}
\end{proof}

\section{Distributed-Backend Verification}
\label{sup:dist_verify}

To confirm that the simulation backend of Section~V (main paper) faithfully reproduces real distributed execution, we ran DMTT and FedAvg (dynamic) on the ZeroMQ distributed backend at $10$ nodes across three Byzantine fractions ($10\%$, $30\%$, $50\%$) and two seeds each ($20$ rounds, $\Delta = 10\,\mathrm{s}$ per round, IPC transport). Figure~\ref{fig:dist_verify} overlays the per-round honest-node accuracy from both backends.

Across all six (condition, fraction) pairs the two backends produce qualitatively identical trajectories. For DMTT, the mean final honest-node accuracy (last $5$ of $20$ rounds) agrees within $0.07$ between the two backends across all fractions; the largest gap occurs at $30\%$ Byzantine ($0.845$ distributed vs.\ $0.775$ simulation), attributable to the short $20$-round window over which DMTT is still mid-convergence. For FedAvg (dynamic), both backends agree that the unprotected baseline collapses to near-chance ($0.13$--$0.18$) regardless of Byzantine fraction. This qualitative agreement confirms that coordinator-free wall-clock synchronisation and peer-to-peer ZeroMQ transport introduce no systematic deviation from the in-process simulation, justifying the use of the simulation backend for the main $100$-node sweep in Section~VII.

\begin{figure}[H]
\centering
\includegraphics[width=\columnwidth]{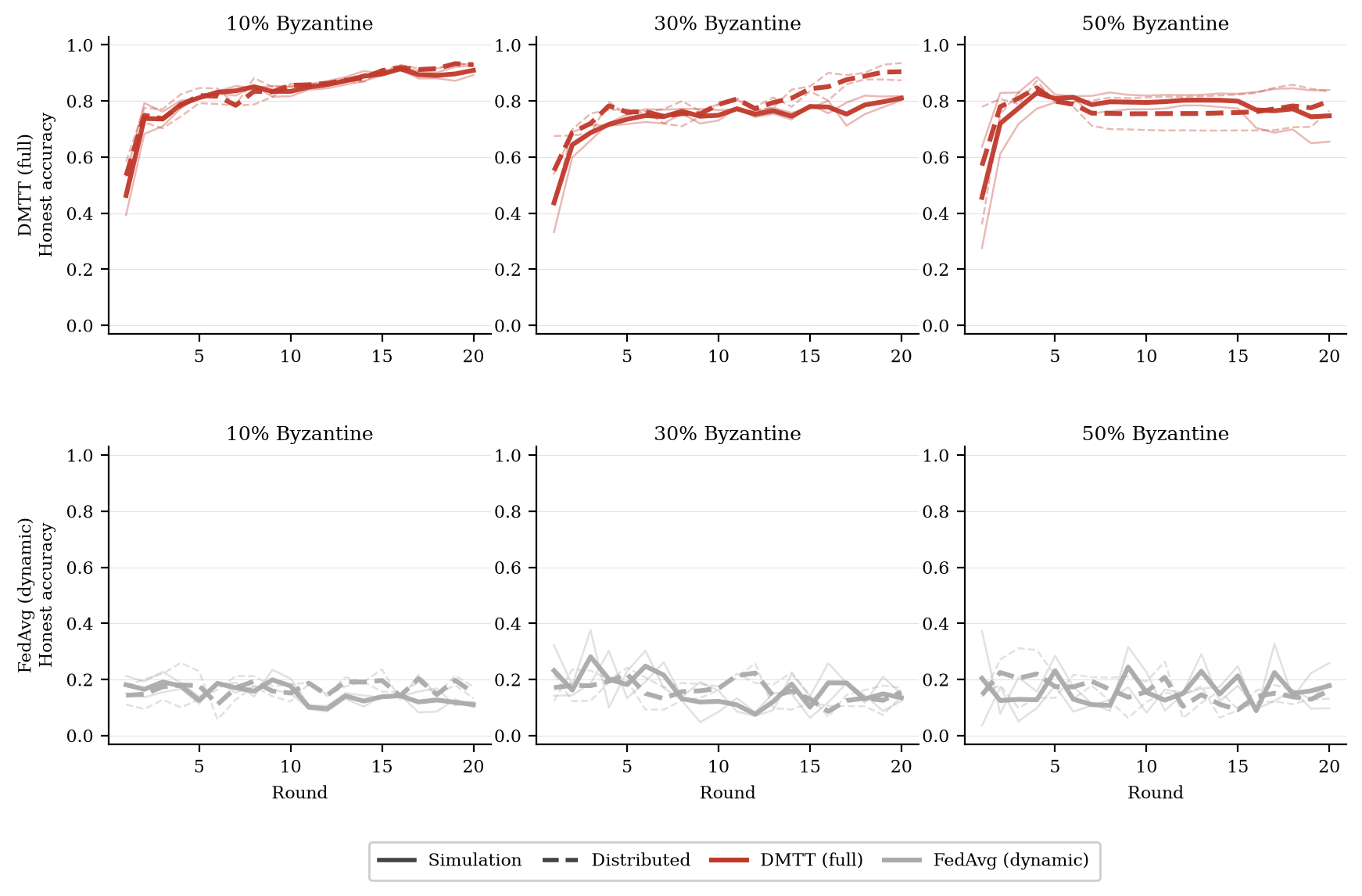}
\caption{Distributed-backend verification at $10$ nodes. Solid lines: simulation backend; dashed lines: ZeroMQ distributed backend (real OS processes, wall-clock synchronisation). Thin lines show individual seeds; bold lines show the mean over $2$ seeds. Rows: DMTT (top) and FedAvg dynamic (bottom). Columns: $10\%$, $30\%$, $50\%$ Byzantine fraction.}
\label{fig:dist_verify}
\end{figure}

\end{document}